\newtheorem{lemma}{Lemma}
\newtheorem{corollary}{Corollary}
\newtheorem{definition}{Definition}
\newcommand{\set}[1]{\protect\ensuremath{\left\{ #1\right\}}\xspace}
\DeclareMathOperator{\dd}{{\textup{d}}}
\DeclareMathOperator{\diam}{{\textup{diam}}}
\DeclareMathOperator{\radius}{{\textup{rad}}}
\DeclareMathOperator{\ecc}{{\textup{ecc}}}
\DeclareMathOperator{\OO}{{\cal O}}
\DeclareMathOperator{\mw}{mw}
\DeclareMathOperator{\sw}{sw}
\def\computeAccVal{\texttt{computeAccVal}}
\def\HRS{\texttt{hierarchicalDominatingSet}}
\def\mates{\texttt{mates}}
\newtheorem{observation}{Observation}
\crefname{algocf}{alg.}{algs.}
\Crefname{algocf}{Algorithm}{Algorithms}
\title{Hyperbolicity Computation through Dominating Sets\thanks{This work has been supported by the French government, through the UCA$^\textsc{jedi}$ Investments in the Future project managed by the National Research Agency (ANR) with the reference number ANR-15-IDEX-01, the ANR project Multimod with the reference number ANR-17-CE22-0016 and the ANR project Distancia with reference number ANR-17-CE40-0015.}} 
\author[1]{David Coudert}
\author[2]{Andr\'e Nusser}
\author[3]{Laurent Viennot}
\affil[1]{Universit\'e C\^ote d'Azur, Inria, CNRS, I3S, France}
\affil[2]{Max Planck Institute for Informatics and Graduate School of Computer Science, Saarland Informatics Campus, Saarbrücken, Germany}
\affil[3]{Inria, Paris University, CNRS, Irif, France}
\date{\vspace{-5ex}}
\begin{document}

\maketitle

\maketitle

\begin{abstract}
Hyperbolicity is a graph parameter related to how much a graph resembles a tree with respect to distances. Its computation is challenging as the main approaches consist in scanning all quadruples of the graph or using fast matrix multiplication as building block, both are not practical for large graphs. In this paper, we propose and evaluate an approach that uses a hierarchy of distance-$k$ dominating sets to reduce the search space. This technique, compared to the previous best practical algorithms, enables us to compute the hyperbolicity of graphs with unprecedented size (up to a million nodes)  and speeds up the computation of previously attainable graphs by up to 3 orders of magnitude while reducing the memory consumption by up to more than a factor of 23.

\textbf{Keywords:} Gromov hyperbolicity; graph algorithms; algorithm engineering.
\end{abstract}


\section{Introduction}
\label{sec:intro}

This paper aims at computing the hyperbolicity
of graphs with millions of nodes. 
Hyperbolicity is a graph parameter used to classify complex networks~\cite{Abu2015,AlrasheedD15,Kennedy2013}. It can be seen as a measure of how much a network is ``democratic''~\cite{ADM14,BCC15}. It has recently received growing attention as it
appears to capture important properties of several large practical graphs such as Internet~\cite{ShavittT04}, the Web~\cite{MunznerB95} and database relations~\cite{WalterR02}.
Additionally, measuring hyperbolicity has applications in routing~\cite{Boguna2010,ChepoiDEHVX12,Krauthgamer2006}, 
and bioinformatics~\cite{Chakerian2012,Dress2012}.
See~\cite{Abu2015,Dragan2013} for recent surveys.

Hyperbolicity is usually defined through a 4-points condition that associate to each quadruple a $\delta$-value defined through distances between the four nodes (see Section~\ref{sec:def-notations} for the formal definition). The hyperbolicity of a graph is the maximum of the $\delta$-value over all quadruples in the graph. This definition trivially results in a $\Theta(n^4)$ algorithm for computing hyperbolicity, and the best known theoretical complexity is $O(n^{3.69})$, relying on an optimized (max,min)-matrix product~\cite{FournierIV15}.
However, the algorithms exhibiting the best performances in practice have time complexity in $\OO(n^4)$-time~\cite{BorassiCCM15,CohenCL15,coudert:hal-03201405}.

The quest for computing the hyperbolicity of large graphs has lead to gain orders of magnitude in the size of the considered graphs.
Indeed, the first attempts are based on brute force implementations of the trivial $\Theta(n^4)$ algorithm~\cite{Chakerian2012,Adcock2013}.
More precisely, \cite{Chakerian2012,distory} uses the revolving door algorithm to explore all quadruples, thus enabling to compute the hyperbolicity of graphs with few hundreds of nodes. Then, using massive parallelism (up to 1\,000 cores), \cite{Adcock2013} was able to consider graphs with up to 8\,000 nodes, but this approach is not scalable.
The first noticeable progress is due to~\cite{CohenCL15} which introduces pruning techniques to drastically reduce the number of quadruples to consider. With the addition of refined pruning techniques~\cite{BorassiCCM15,SotoGomez2011}, it enables to compute the hyperbolicity of graphs with up to 50\,000 nodes.
Furthermore, the running time of this algorithm using a single core is orders of magnitude smaller  than the running times reported in~\cite{Adcock2013}.
However, this algorithm reaches a memory bottleneck as it has space complexity in $\OO(n^2)$. 
To go beyond this bottleneck, \cite{coudert:hal-03201405} engineered an algorithm, along with suitable data structures, that consumes significantly less memory while offering good performances in practice. It enables to compute the hyperbolicity of graphs with more than 100\,000 nodes. Nonetheless, the memory usage of this algorithm is still high, which limits its scalability.
In this paper, we propose a new approach that uses a hierarchy of distance-$k$ dominating sets to
both reduce memory usage and further prune the search space, resulting in an algorithm enabling to compute for the first time the hyperbolicity of graphs with up to a million nodes.

\subsection{Our approach.}
In this paper, we propose to reduce the search space through the use of a hierarchy of distance-$k$ dominating sets. The approach applies to graphs having small distance-$k$ dominating sets for small $k$, which is often the case in practice. Recall that a distance-$k$ dominating set is a set $D$ such that any node of the graph is at distance at most $k$ from a node in $D$. 
The main idea is to explore quadruples within the dominating set. Only when the $\delta$-value of such a quadruple is large enough (depending on the highest $\delta$-value found so far and $k$), we explore recursively the quadruples dominated by it. More precisely, each node is associated to its closest dominator, that is a node of $D$ at distance at most $k$. Each quadruple of the graph is said to be dominated by the four associated dominators. For each dominator, we compute recursively a distance-$k'$ dominating set of the nodes associated to it with $k'<k$, and repeat recursively this process with smaller and smaller values of $k'$. The search explores recursively quadruples of dominators in distance-$k'$ dominating sets with smaller and smaller values of $k'$. At the deepest level, the search uses $k'=0$ and all quadruples with potential high $\delta$-value are explored. Although many quadruples are usually skipped, the quadruples with highest $\delta$-value are guaranteed to be visited and the computation is exact. Besides having small distance-$k$ dominating sets, the efficiency of the method also relies on the property that most of the quadruples have a relatively small $\delta$-value which seems to be the case in practice when the hyperbolicity of the graph is large enough compared to $k$. More precisely, the graph should have hyperbolicity greater than $2k$ and a $k$-dominating set of few thousand nodes at most. 

\subsection{Main contributions.}

Our main contribution is a new approach for scanning nodes of a graph through a hierarchy of dominating sets.
In particular we show its effectiveness for computing hyperbolicity through a set of technical lemmas allowing to relate the delta value of a quadruple to that of four nodes dominating them in a distance-$k$ dominating set. 
We also provide an implementation~\cite{gitlabv2} proving the efficiency of the method by computing the hyperbolicity of practical graphs with unprecedented size.

\subsection{Other related works.}
Graphs with hyperbolicity at most 1 can be recognized efficiently. Indeed,  0-hyperbolic graphs are block graphs (graphs whose biconnected components are cliques)~\cite{Bandelt1986,Howorka1979}, which can be recognized in time $\OO(n+m)$. Furthermore, using a 2-approximation algorithm, one can decide in time $\OO(n^{2.69})$ if a graph has hyperbolicity at most 1~\cite{FournierIV15}. Finally, deciding if a graph is  $\frac{1}{2}$-hyperbolic is equivalent to decide if it contains an induced 4-cycle~\cite{CoudertD14}, which can be done in time $\OO(n^{3.26})$~\cite{LeGall2012}.

Preprocessing methods for reducing the size of the input graph have been proposed.
In particular, the hyperbolicity of a graph is the maximum over the hyperbolicity of its biconnected components (see~\cite{CohenCL15} for a proof). Taking the maximum hyperbolicity over the atoms of a decomposition of a graph by clique-minimal separators~\cite{Tarjan1985,Berry2010b} results in an additive +1 approximation of hyperbolicity and~\cite{CohenCDL17} shows how to modify the atoms to get the exact hyperbolicity of the graph.
Moreover, \cite{SotoGomez2011} proved that the hyperbolicity of a graph is the maximum of the hyperbolicity of the graphs resulting from either a \emph{modular}~\cite{Gallai1967,Habib2010} or a \emph{split}~\cite{Cunningham1980,Cunningham1982} decomposition of the graph and~\cite{CoudertDP19} proposed algorithms with time complexity in $\OO(\mw(G)^{3}\cdot n+m)$ and $\OO(\sw(G)^{3}\cdot n+m)$, where $\mw(G)$ is the \emph{modular width} and $\sw(G)$ the \emph{split width}. Observe that these last decompositions can be computed in time~$\OO(n+m)$~\cite{CharbitMR12}. 


\subsection{Organisation.}

\S~\ref{sec:def-notations} introduces the formal definitions of hyperbolicity, $\delta$-value and distance-$k$ domination. \S~\ref{sec:domhyp} provides the technical lemma relating the $\delta$-value of a quadruple to that of four nodes dominating it. \S~\ref{sec:extending_borassi} provides technical lemmas for extending~\cite{BorassiCCM15} as well as our new algorithm. Some experiments are presented in \S~\ref{sec:experimentation} before concluding with \S~\ref{sec:conclusion}.


\section{Definitions and notations}
\label{sec:def-notations}

We use the graph terminology of~\cite{Bondy1976,Diestel1997}.
All graphs considered in this paper are finite, connected, unweighted and simple.
The graph $G=(V,E)$ has $n = |V|$ vertices and $m = |E|$ edges.
The open neighborhood $N_G(S)$ of a set $S \subseteq V$ consists of all vertices in $V \setminus S$ with at least one neighbor in $S$.

Given two vertices $u$ and $v$, a \emph{$uv$-path} of length $\ell \geq 0$ is a sequence of vertices $(u=v_0v_1\ldots v_{\ell}=v)$, such that $\{ v_i, v_{i+1} \}$ is an edge for every $i$.
In particular, a graph $G$ is \emph{connected} if there exists a $uv$-path for all pairs $u,v \in V$, and in such a case the \emph{distance} $\dd_G(u,v)$ is defined as the minimum length of a $uv$-path in $G$.
When $G$ is clear from the context, we write $\dd$ (resp. $N$) instead of $\dd_G$ (resp. $N_G$).
The \emph{eccentricity} $\ecc(u)$ of a vertex $u$ is the maximum distance between $u$ and any other vertex $v\in V$, i.e., $\ecc(u) = \max_{v\in V} \dd(u,v)$. The maximum eccentricity is the \emph{diameter} $\diam(G)$ and the minimum eccentricity is the \emph{radius} $\radius(G)$.

\textbf{Domination.}
Given an integral $k>0$, we say that a node $u$ \emph{$k$-dominates} a node $v$ when $\dd(u,v)\le k$.
We define a \emph{$k$-dominating set} of $G$ as a set $D\subseteq V$ of nodes such that any node $v\in V$ is \emph{$k$-dominated} by some node $u\in D$. Given a $k$-dominating set $D$, we associate to any node $v\in V$ such a $k$-dominating node $D(v)$ in $D$. $D(v)$ is called the \emph{associated dominator} of~$v$. 
We denote $D^{-1}(u)$ the set of vertices that are $k$-dominated by $u\in D$, i.e., $D^{-1}(u)=\set{v \in V : D(v) = u}$.
For each node $u\in D$, we define its \emph{domination radius} as $k_u=\max_{v\in D^{-1}(u)} \dd(u,v)$.
When a node $v$ is $k$-dominated by several nodes of $D$, the choice of its associated dominator $D(v)$ can be arbitrary. However, we will typically choose $D(v)$ as a closest node to $v$ in $D$ as a heuristic to obtain smaller values of $k_u$ with $u\in D$.

\textbf{Hyperbolicity.} This notion has been introduced to measure how the shortest-path metric space $(V, \dd_G)$ of a connected graph $G=(V, E)$ deviates from a tree metric when its vertices are mapped into the vertices of an edge-weighted tree. This additive stretch of the distances, denoted $\delta$, is called the \emph{hyperbolicity} of the graph and a graph is said to be  $\delta$-hyperbolic if it satisfies the 4-point condition below.

\begin{definition}[\textbf{$4$-points Condition,~\cite{Gromov1987}}]
\label{def:hyperbolic}
  Let $G$ be a connected graph.
  For every quadruple $u,v,x,y$ of vertices of $G$, we define $\delta(u,v,x,y)$ as half of the difference between the two largest sums among
  $S_1=\dd(u,v)+\dd(x,y)$, $S_2 = \dd(u,x)+\dd(v,y)$, and $S_3 = \dd(u,y)+\dd(v,x)$.

  The hyperbolicity of $G$, denoted by $\delta(G)$, is equal to $\max_{u,v,x,y\in V(G)} \delta(u,v,x,y)$.
Moreover, given a value $\overline{\delta}$, we say that $G$ is \emph{$\overline{\delta}$-hyperbolic} whenever $\delta(G) \leq \overline{\delta}$.
\end{definition}

Note that if $G$ is  a tree, or a clique, 
one can easily show $S1=S2$ for all quadruples, implying $\delta(G)=0$.
If $G$ is a cycle of order $n = 4p + \varepsilon$, with $p \geq 1$ and $0 \leq \varepsilon < 4$, then $\delta(G) = p - 1/2$ when $\varepsilon = 1$, and $\delta(G)=p$ otherwise. 
If $G$ is a $n \times m$ grids, with $2 \leq n \leq m$, then we have $\delta(G) = n - 1$.
Other definitions of hyperbolicity have been proposed~\cite{Bermudo2013,LaHarpe1990,Gromov1987} and differ only by a small constant factor.


\section{Using dominating set for approximating hyperbolicity}
\label{sec:domhyp}

In this section, we show how to obtain an additive $+4k$ approximation of hyperbolicity from a $k$-dominating set. 

\begin{lemma}\label{lem:dominating}
Given a $k$-dominating set $D$ of $G$, 
a quadruple $u',v',x',y'\in V$ with respective associated dominators $u,v,x,y\in D$,
we then have $\delta(u,v,x,y) - K_4 \le \delta(u',v',x',y')\le \delta(u,v,x,y) + K_4$ where $K_4 = k_{u} + k_{v} + k_{x} + k_{y}$.
\end{lemma}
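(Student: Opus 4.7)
My plan is to bound the change in each of the three sums $S_1, S_2, S_3$ defining $\delta$ via the triangle inequality, and then to argue that the map sending these three sums to $\delta$ is $1$-Lipschitz with respect to the $\ell_\infty$ norm on triples, so that the individual bounds combine into the desired bound on $\delta$.

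First I would observe that by the definitions of the associated dominator and its domination radius, $\dd(u,u') \le k_u$, and similarly $\dd(v,v') \le k_v$, $\dd(x,x') \le k_x$, $\dd(y,y') \le k_y$. Applying the triangle inequality twice then yields $|\dd(u,v) - \dd(u',v')| \le \dd(u,u') + \dd(v,v') \le k_u + k_v$, and analogously for each of the remaining five pairs of endpoints. Summing the two pairwise bounds that enter each of $S_1 = \dd(u,v)+\dd(x,y)$, $S_2 = \dd(u,x)+\dd(v,y)$, $S_3 = \dd(u,y)+\dd(v,x)$ gives $|S_i - S_i'| \le K_4$ for all three $i$, where $S_i'$ denotes the sum computed from the primed quadruple.

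Next I would invoke the characterization $2\delta = \max_i S_i - \mathrm{med}_i S_i$, where $\mathrm{med}$ denotes the second-largest (equivalently, the middle) of the three sums. Both operations are $1$-Lipschitz with respect to the $\ell_\infty$ norm on triples: the max case is immediate, and for the median a direct sandwich argument works -- if each coordinate shifts by at most $K$ in absolute value, then at least two of the new values are at most $m + K$ (the two originally at most $m$) and at least two are at least $m - K$, forcing the new median to lie in $[m - K, m + K]$. Combining the two Lipschitz bounds with the estimate $|S_i - S_i'| \le K_4$ gives $|2\delta(u,v,x,y) - 2\delta(u',v',x',y')| \le 2 K_4$, which is the claimed inequality.

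The main obstacle is to carry out the Lipschitz analysis of the median tightly. The tempting shortcut $\mathrm{med}(a,b,c) = a+b+c-\max(a,b,c)-\min(a,b,c)$ propagates the error loosely and only yields a $3K_4$ bound on the change of the median, which would leave an extra factor behind. One therefore needs the direct two-sided sandwich sketched above (or an equivalent case analysis on which of the three sums is the second-largest on each side). Every other step is a routine application of the triangle inequality for graph distances.
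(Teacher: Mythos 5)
Your proof is correct, and it takes a genuinely different route from the paper's. The paper argues by case analysis: it first disposes of the case $\delta(u',v',x',y')\le K_4$ trivially, and otherwise uses the perturbation bounds on the sums to show that the sum that is largest for one quadruple remains largest for the other, from which the bound follows via $2\delta=\min\{S_1-S_2,\,S_1-S_3\}$. You instead establish the uniform bound $|S_i-S_i'|\le K_4$ for all three sums and then observe that $2\delta=\max_i S_i-\mathrm{med}_i\,S_i$ is the difference of two functions that are each $1$-Lipschitz in the $\ell_\infty$ norm, giving $|2\delta-2\delta'|\le 2K_4$ in one stroke. Your sandwich argument for the Lipschitz continuity of the median (second-largest) is sound. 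Your approach is shorter, symmetric in the two quadruples (so both inequalities come out at once), and isolates the only fact actually used --- that each sum is perturbed by at most $K_4$ --- so it would apply verbatim to any bounded perturbation of the metric; the paper's version has the side benefit of explicitly tracking which sum stays largest, a bookkeeping device that recurs in its later lemmas. One tiny quibble with your aside: the shortcut $\mathrm{med}=a+b+c-\max-\min$ naively propagates to a $5K_4$ (not $3K_4$) bound on the median's change, but since you correctly reject it in favor of the direct argument, this does not affect your proof.
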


\begin{proof}
We first prove the second inequality.
Assume without loss of generality that the first sum for quadruple $u',v',x',y'$ is $S_1'=\dd(u',v')+\dd(x',y')$, the second sum is $S_2'=\dd(u',x')+\dd(v',y')$ and the third sum is $S_3'=\dd(u',y')+\dd(v',x')$ where $S_1'\ge S_2'\ge S_3'$. We let $S_1,S_2,S_3$ denote the corresponding sums with $u,v,x,y$: $S_1=\dd(u,v)+\dd(x,y)$, $S_2=\dd(u,x)+\dd(v,y)$ and  $S_3=\dd(u,y)+\dd(v,x)$. 

When $\delta(u',v',x',y')\le K_4$, we obviously have $\delta(u',v',x',y')\le \delta(u,v,x,y) + K_4$ as $\delta(u,v,x,y)\ge 0$.

Now consider the case $\delta(u',v',x',y')\ge K_4$. We can then show $S_1\ge S_2$ and $S_1\ge S_3$ as follows.
Note that by triangle inequality, we have $\dd(u',v')\le \dd(u',u)+\dd(u,v)+\dd(v,v')\le \dd(u,v)+k_u+k_v$.
Using similar inequalities for other pairs, we can deduce $S_1-S_2\ge S_1' - K_4 - (S_2' + K_4)\ge S_1' - S_2' - 2K_4$.
By hyperbolicity definition, we have $S_1'-S_2'=2\delta(u',v',x',y')$, and we get $S_1-S_2\ge 0$ as we assume $\delta(u',v',x',y')\ge K_4$. Similarly, we have $S_1-S_3\ge S_1'-S_3'-2K_4\ge S_1'-S_2'-2K_4$ since we have $S_3'\le S_2'$.
We thus also have $S_1-S_3\ge 0$. 

We finally prove $\delta(u',v',x',y') \le \delta(u,v,x,y) + K_4$. Having $S_1\ge S_2$ and $S_1\ge S_3$, the greatest sums for quadruple $u,v,x,y$ are either $S_1$ and $S_2$ or $S_1$ and $S_3$, implying $2\delta(u,v,x,y)=\min\{S_1-S_2, S_1-S_3\}$. As both $S_1-S_2$ and $S_1-S_3$ are at least $S_1'-S_2'-2K_4$ from above inequalities, we obtain $2\delta(u,v,x,y) \ge S_1'-S_2'-2K_4 = 2\delta(u',v',x',y')-2K_4$ as desired.

\medskip

We now prove the first inequality. We assume without loss of generality that the first sum for quadruple $u,v,x,y$ is $S_1=d(u,v)+d(x,y)$, the second sum is $S_2=d(u,x)+d(v,y)$ and the third sum is $S_3=d(u,y)+d(v,x)$. When $\delta(u,v,x,y)\le K_4$, we obviously have $\delta(u,v,x,y) - K_4 \le \delta(u',v',x',y')$ as $\delta(u',v',x',y')\ge 0$. Otherwise, we can prove similarly as before that we have $S_1'-S_2'\ge S_1-S_2-2K_4\ge 0$ and $S_1'-S_3'\ge S_1-S_3-2K_4\ge S_1-S_2-2K_4\ge 0$. This ensures that $S_1'$ is the greatest sum for $u',v',x',y'$ in that case too. We thus have $2\delta(u',v',x',y')=\min\{S_1'-S_2', S_1'-S_3'\}\ge \min\{S_1-S_2-2K_4, S_1-S_3-2K_4\}=S_1-S_2-2K_4=2\delta(u,v,x,y) - 2K_4$.
\end{proof}

From Lemma~\ref{lem:dominating}, we can get an additive $+4k$ approximation of the hyperbolicity of the graph.
\begin{corollary}\label{cor:dominating}
Given a $k$-dominating set $D$ of $G$, let $\delta_L = \max_{u,v,x,y\in D} \delta(u,v,x,y)$.
We have $\delta_L \leq \delta(G) \leq \delta_L + 4k$.
\end{corollary}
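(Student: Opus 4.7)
The plan is to derive both inequalities almost directly from Lemma~\ref{lem:dominating}, with the observation that for a $k$-dominating set every domination radius $k_u$ satisfies $k_u \le k$.

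For the lower bound $\delta_L \le \delta(G)$, I would simply note that $D\subseteq V$, so the maximum of $\delta(u,v,x,y)$ taken over $u,v,x,y\in D$ cannot exceed the maximum over $u,v,x,y\in V$, which is $\delta(G)$ by definition. This step is immediate and requires no appeal to Lemma~\ref{lem:dominating}.

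For the upper bound $\delta(G)\le \delta_L+4k$, I would pick a quadruple $u',v',x',y'\in V$ that realizes $\delta(G)$, and let $u,v,x,y\in D$ denote its associated dominators. Applying the second inequality of Lemma~\ref{lem:dominating} to this quadruple yields
\[
\delta(u',v',x',y') \le \delta(u,v,x,y) + K_4, \quad \text{where } K_4 = k_u+k_v+k_x+k_y.
\]
Since $D$ is a $k$-dominating set, every domination radius satisfies $k_w\le k$, hence $K_4\le 4k$. Moreover, $u,v,x,y\in D$ implies $\delta(u,v,x,y)\le \delta_L$ by definition of $\delta_L$. Combining these with $\delta(G)=\delta(u',v',x',y')$ gives the desired bound $\delta(G)\le \delta_L+4k$.

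There is essentially no obstacle in this proof: all the work has been absorbed into Lemma~\ref{lem:dominating}, and the corollary is a clean specialization obtained by bounding the domination radii by $k$ and maximizing over quadruples of $D$. The only thing to be careful about is the interpretation of $\delta_L$ when $|D|<4$, but in practice this corner case can be handled by observing that $\delta$ on degenerate quadruples is $0$, and anyway the interesting regime is $|D|\ge 4$.
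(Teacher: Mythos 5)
Your proof is correct and follows exactly the route the paper intends: the lower bound is immediate from $D\subseteq V$, and the upper bound follows by applying the second inequality of Lemma~\ref{lem:dominating} to a quadruple realizing $\delta(G)$ and bounding each domination radius by $k$ so that $K_4\le 4k$. The paper leaves this corollary without an explicit proof precisely because it is this direct specialization of the lemma.
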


\section{Extending Borassi et al. to $k$-domination} \label{sec:extending_borassi}

In this section, we build upon the algorithm proposed in~\cite{BorassiCCM15} to design a new algorithm for hyperbolicity that exploits $k$-domination.

\subsection{Skippable, acceptable and valuable.}
In this section, we show how to extend the pruning techniques proposed in~\cite{BorassiCCM15,CohenCL15} to the use of $k$-dominating sets.
As in~\cite{BorassiCCM15,CohenCL15}, we define for any quadruple $u,v,x,y$:
\begin{equation*}
\tau(u,v,x,y) \coloneqq \frac{1}{2}\big(\dd(u, v) + \dd(x,y)
- \max\{ \dd(x,u)+\dd(y,v), \dd(x,v)+\dd(y,u) \} \big).
\end{equation*}

As noted in~\cite{BorassiCCM15,CohenCL15}, we have $\delta(G) = \max_{u,v,x,y\in V} \tau(u,v,x,y)$. Indeed, if $\dd(u, v) + \dd(x,y)$ is the largest sum as defined in Definition~\ref{def:hyperbolic}, we have $\tau(u,v,x,y) = \delta(u,v,x,y)$, and otherwise $\tau(u,v,x,y)\leq 0$.

Given a $k$-dominating set $D$ of a connected graph $G$ and a quadruple $u,v,x,y \in D$, we now establish upper bounds on the values $\delta(u',v',x',y')$ and $\tau(u',v',x',y')$ for every quadruple $u',v',x',y'$ with respective associated dominators $u,v,x,y$.

\begin{lemma}[\cite{CohenCL15}]\label{lem:shape}
 For every quadruple $u,v,x,y$ of vertices of a connected graph $G$, we have $\delta(u,v,x,y)\leq \min_{a,b\in\set{u,v,x,y}}\dd(a,b)$. Furthermore, if $S_1 = \dd(u,v) + \dd(x,y)$ is the largest of the sums defined in Definition~\ref{def:hyperbolic} (which can be assumed w.l.o.g.), we have $\delta(u,v,x,y)\leq \frac{1}{2}\min\set{\dd(u,v),\dd(x,y)}$.
\end{lemma}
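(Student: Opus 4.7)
The plan is to work directly with the definition $\delta(u,v,x,y) = \frac{1}{2}(S_1 - S_2)$, where $S_1 \geq S_2 \geq S_3$ are the three sums from \Cref{def:hyperbolic}. Without loss of generality I would assume $S_1 = \dd(u,v) + \dd(x,y)$. The central algebraic observation that I would exploit throughout is the identity $S_2 + S_3 = \dd(u,x) + \dd(v,y) + \dd(u,y) + \dd(v,x)$, i.e., the sum $S_2 + S_3$ is exactly the sum of the four ``crossed'' distances not appearing in $S_1$.

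For the stronger second inequality, I would first use an averaging trick: since $S_2 \geq S_3$, we have $2 \cdot 2\delta = 2(S_1 - S_2) \leq (S_1 - S_2) + (S_1 - S_3) = 2S_1 - (S_2 + S_3)$. Then I would apply the triangle inequality twice to $\dd(x,y)$, routing it once through $u$ and once through $v$: $\dd(x,y) \leq \dd(x,u) + \dd(u,y)$ and $\dd(x,y) \leq \dd(x,v) + \dd(v,y)$. Summing and using the identity above yields $2\dd(x,y) \leq S_2 + S_3$, hence $4\delta \leq 2S_1 - 2\dd(x,y) = 2\dd(u,v)$, giving $\delta \leq \tfrac{1}{2}\dd(u,v)$. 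By the symmetric argument, routing $\dd(u,v)$ through $x$ and through $y$, I would obtain $\delta \leq \tfrac{1}{2}\dd(x,y)$, completing the second claim.

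For the first (weaker) inequality, the two ``diagonal'' pairs $\{u,v\}$ and $\{x,y\}$ are already covered by the second claim, since $\delta \leq \tfrac{1}{2}\dd(\cdot,\cdot) \leq \dd(\cdot,\cdot)$. For a ``mixed'' pair such as $\{u,y\}$, I would bound $S_1 - S_2$ directly by the triangle inequality: $\dd(u,v) \leq \dd(u,y) + \dd(y,v)$ and $\dd(x,y) \leq \dd(x,u) + \dd(u,y)$, so that $2\delta = S_1 - S_2 \leq 2\dd(u,y)$. The same trick applied to the other three mixed pairs gives $\delta \leq \dd(a,b)$ for every pair $a,b \in \{u,v,x,y\}$, and taking the minimum yields the claim.

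I expect the main difficulty to be purely bookkeeping rather than conceptual: once the WLOG ordering fixes $S_1 = \dd(u,v) + \dd(x,y)$, one must be careful to invoke the \emph{right} pair of triangle inequalities, namely those that route the ``$S_1$-distance'' through each of the two other vertices. It is precisely this pairing, combined with the averaging of $S_1 - S_2$ and $S_1 - S_3$, that produces the factor $\tfrac{1}{2}$ in the second inequality; without the averaging, triangle inequality alone only yields the weaker bound used for the mixed pairs.
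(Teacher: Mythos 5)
Your proof is correct; the paper itself gives no proof of this lemma, citing \cite{CohenCL15}, and your argument (averaging $S_1-S_2$ and $S_1-S_3$ together with the two triangle inequalities routing $\dd(x,y)$ through $u$ and through $v$ for the factor-$\tfrac12$ bound, and the direct $S_1-S_j\le 2\dd(a,b)$ bounds for the mixed pairs) is exactly the standard derivation from that reference. Nothing is missing.
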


\begin{lemma}\label{lem:shape_dom}
	Given a $k$-dominating set $D$ of a connected graph $G$ and a quadruple $u,v,x,y \in D$, then for each quadruple $u',v',x',y'$ with respective associated dominators $u,v,x,y$,
	we have 
\begin{equation*}
\delta(u',v',x',y') 
\leq \min_{a,b\in\{u,v,x,y\}}k_{a}+k_{b}+\dd(a, b)
\leq 2k + \min_{a,b\in\{u,v,x,y\}}\dd(a, b)
\end{equation*}
Furthermore, assuming that $S_1 = \dd(u,v) + \dd(x,y)$ is the largest of the sums defined in Definition~\ref{def:hyperbolic} (which can be assumed w.l.o.g.),
for each quadruple $u',v',x',y'$ with respective associated dominators $u,v,x,y$ we have: 

$\delta(u',v',x',y')\leq k_{u}+k_{v} + k_{x}+k_{y} +\frac{1}{2}\min\set{\dd(u,v), \dd(x,y)} 
 \leq 4k + \frac{1}{2}\min\set{\dd(u,v), \dd(x,y)}$
 
 and
 
 $\tau(u',v',x',y') 
  \leq\frac{1}{2}\min\{k_{u}+k_{v}+\dd(u,v), k_{x}+k_{y}+\dd(x,y)\}
  \leq k + \frac{1}{2}\min\set{\dd(u,v), \dd(x,y)}$.

\end{lemma}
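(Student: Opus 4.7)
The plan is to combine a single triangle-inequality observation with the previously established Lemmas~\ref{lem:shape} and~\ref{lem:dominating}. The central observation is that for any two vertices $a,b\in\{u,v,x,y\}$ with corresponding primed vertices $a',b'$, the triangle inequality yields
\[
\dd(a',b') \;\le\; \dd(a',a) + \dd(a,b) + \dd(b,b') \;\le\; k_a + \dd(a,b) + k_b,
\]
since $a$ $k_a$-dominates $a'$ and $b$ $k_b$-dominates $b'$. Everything below is a mechanical combination of this with the two lemmas cited.

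For the first inequality, I would apply the general (first) part of Lemma~\ref{lem:shape} to the primed quadruple to get $\delta(u',v',x',y')\le \min_{a',b'}\dd(a',b')$, and then replace each $\dd(a',b')$ by the triangle-inequality bound above, producing $\min_{a,b}\{k_a+k_b+\dd(a,b)\}$. The $+2k$ form follows from $k_a,k_b\le k$.

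For the $\frac{1}{2}\min\{\dd(u,v),\dd(x,y)\}$ bound on $\delta(u',v',x',y')$, I would chain Lemma~\ref{lem:dominating} with the stronger half of Lemma~\ref{lem:shape}. The assumption that $S_1=\dd(u,v)+\dd(x,y)$ is the largest sum for the dominators lets Lemma~\ref{lem:shape} give $\delta(u,v,x,y)\le \frac{1}{2}\min\{\dd(u,v),\dd(x,y)\}$. Adding the slack $K_4=k_u+k_v+k_x+k_y$ from Lemma~\ref{lem:dominating} then yields the claim, and the $+4k$ form follows again from $k_\cdot\le k$. For the $\tau$ bound, I would split into two cases: if $\tau(u',v',x',y')\le 0$, the claim is trivial since the right-hand side is non-negative; otherwise $\tau(u',v',x',y')>0$ forces $S_1'=\dd(u',v')+\dd(x',y')$ to be the largest sum for the primed quadruple, so $\tau(u',v',x',y')=\delta(u',v',x',y')$, and the stronger half of Lemma~\ref{lem:shape} applied to the primed quadruple gives $\delta(u',v',x',y')\le \frac{1}{2}\min\{\dd(u',v'),\dd(x',y')\}$. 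Substituting the triangle-inequality bound on each of $\dd(u',v')$ and $\dd(x',y')$ produces exactly $\frac{1}{2}\min\{k_u+k_v+\dd(u,v),\,k_x+k_y+\dd(x,y)\}$.

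The only subtlety — and the sole slip-up risk — is in the $\tau$ case: the hypothesis that $S_1$ is largest applies to the dominators, not to the primed quadruple, so Lemma~\ref{lem:shape} cannot be invoked on $u',v',x',y'$ until one first observes that $\tau(u',v',x',y')\ge 0$ forces $S_1'$ to be the largest sum for the primed quadruple as well. Once this observation is made, the remainder is straightforward.
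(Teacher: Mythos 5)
Your proof is correct. The first two bounds are derived exactly as in the paper: Lemma~\ref{lem:shape} applied to the primed quadruple combined with the triangle inequality $\dd(a',b')\le k_a+\dd(a,b)+k_b$ gives the first, and chaining Lemma~\ref{lem:dominating} with the stronger half of Lemma~\ref{lem:shape} gives the second. Where you genuinely diverge is the $\tau$ bound. The paper avoids any case analysis: it writes $2\tau(u',v',x',y')=S'_1-\max\{S'_2,S'_3\}\le S'_1-\frac{1}{2}(S'_2+S'_3)$ and then uses the triangle inequality to show $S'_2+S'_3\ge 2\dd(x',y')$ (respectively $\ge 2\dd(u',v')$), which cancels one term of $S'_1$ and yields $2\tau(u',v',x',y')\le\dd(u',v')$ and $2\tau(u',v',x',y')\le\dd(x',y')$ directly; the same triangle-inequality substitution then finishes. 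You instead split on the sign of $\tau$, note that $\tau(u',v',x',y')>0$ forces $S'_1$ to be the largest sum of the \emph{primed} quadruple so that $\tau=\delta$ there, and then invoke the second half of Lemma~\ref{lem:shape} on the primed quadruple. Both routes are valid and reach the same intermediate bound $2\tau(u',v',x',y')\le\min\{\dd(u',v'),\dd(x',y')\}$. Your version correctly flags and handles the one subtlety (the largest-sum hypothesis in the lemma statement concerns the dominators, not the primed quadruple, so Lemma~\ref{lem:shape} cannot be applied to $u',v',x',y'$ without the positivity observation), while the paper's averaging argument sidesteps that issue entirely and is marginally more self-contained --- it is essentially the proof of Lemma~\ref{lem:shape} inlined for the specific pairing used in the definition of $\tau$.
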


\begin{proof}
%
	By Lemma~\ref{lem:shape}, we have that $\delta(u',v',x',y') \leq \min_{a',b'\in\{u',v',x',y\}}\dd(a', b')$. Since for any pair $\{a',b'\} \subset\{u',v',x',y\}$ such that $D(a')=a$ and $D(b')=b$ with $\{a,b\} \subset\{u,v,x,y\}$, we have 
$\dd(a',b') \leq \dd(a',a) + \dd(a,b) + \dd(b,b') \le k_{a}+k_{b} + \dd(a,b) $, by triangle inequality. We thus get $\delta(u',v',x',y') \leq \min_{a,b\in\{u,v,x,y\}}k_{a}+k_{b}+\dd(a, b)$.

Next, assuming that $S_1=\dd(u,v)+\dd(x,y)$ is the largest of the sums defined in Definition~\ref{def:hyperbolic}, 
and that we have $\delta(u',v',x',y')>0$, we get by Lemmas~\ref{lem:dominating} and~\ref{lem:shape} that $\delta(u',v',x',y')\leq k_{u}+k_{v} + k_{x}+k_{y} + \delta(u,v,x,y) \leq k_{u}+k_{v} + k_{x}+k_{y}  + \frac{1}{2}\min\set{\dd(u,v), \dd(x,y)}$.

Define the sums $S'_1 = \dd(u',v')+\dd(x',y')$, $S'_2=\dd(u',x') + \dd(v',y')$ and $S'_3=\dd(u',y')+\dd(v',x')$.
We have $2\tau(u',v',x',y') = S'_1-\max\{S'_2, S'_3\} \leq S'_1-\frac{1}{2}(S'_2+S'_3)$ and $S'_2+S'_3= \dd(u',x') + \dd(v',y') + \dd(u',y')+\dd(v',x') = (\dd(u',x') + \dd(u',y')) + (\dd(v',x') + \dd(v',y'))$. By triangle inequality, we get $S'_2+S'_3 \geq 2\dd(x',y')$ and so $2\tau(u',v',x',y') \leq \dd(u',v') \leq k_u+k_v + \dd(u,v)$.
Similarly, we have by triangle inequality that $S'_2+S'_3= (\dd(u',x') + \dd(v',x')) + (\dd(u',y') + \dd(v',y')) \geq 2\dd(u',v')$, and so $2\tau(u',v',x',y') \leq \dd(x',y') \leq k_x+k_y + \dd(x,y)$.
%
\end{proof}

Interestingly, the upper bound on $\tau(u',v',x',y')$ established in Lemma~\ref{lem:shape_dom} is significantly stronger than the upper bound on $\delta(u',v',x',y')$. This will help reducing the size of the search space, and in particular to decide when to stop the exploration of quadruples and to return the result.

By~\cite[Lemma 8]{BorassiCCM15}, we know that if $x,y,u \in V$ are such that $\ecc(u) + \dd(x,y) - \dd(x,u) - \dd(y,u) \leq 4\delta_L + 2$, where $\delta_L$ is a lower bound on the hyperbolicity of $G$, then for each $v\in V$, we have $\tau(u,v,x,y) \leq \delta_L$. Hence, for each such triple $(x,y,u)$, we can skip the exploration of the $n$ quadruples $(u,v,x,y)$ for $v\in V$. We now adapt this result to the case   $x,y,u \in D$.

\begin{lemma}\label{lem:cut_dom1}
	Given a $k$-dominating set $D$ of a graph $G$, a lower bound $\delta_L$ on the hyperbolicity of $G$ and the vertices $x,y,u\in D$ satisfying $2\ecc(u) + \dd(x,y) - \dd(x,u) - \dd(y,u) + K_8 \leq 4\delta_L + 1$ where $K_8=4k_{u}+2k_{x}+2k_{y}\le 8k$, then for each quadruple $u',v',x',y'$ with respective associated dominators $u,v,x,y$, we have $\tau(u',v',x',y') \leq \delta_L$. 
\end{lemma}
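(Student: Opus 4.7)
The plan is to bound $2\tau(u',v',x',y')$ by imitating the proof of \cite[Lemma~8]{BorassiCCM15} while carefully tracking the error introduced by passing from the quadruple $u',v',x',y'$ to its dominators. I would first apply the standard averaging trick
\[
\max\{\dd(u',x')+\dd(v',y'),\,\dd(u',y')+\dd(v',x')\} \geq \tfrac{1}{2}\bigl((\dd(u',x')+\dd(u',y')) + (\dd(v',x')+\dd(v',y'))\bigr),
\]
and then use triangle inequality at $v'$, namely $\dd(v',x')+\dd(v',y') \ge \dd(x',y')$, to eliminate the terms involving $v'$. This yields the clean intermediate bound
\[
2\tau(u',v',x',y') \;\le\; \dd(u',v') + \tfrac{1}{2}\dd(x',y') - \tfrac{1}{2}\bigl(\dd(u',x')+\dd(u',y')\bigr).
\]

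Next I would pass from primed vertices to their dominators. For the terms to be lower-bounded I apply the reverse triangle inequality: $\dd(u',x') \ge \dd(u,x) - k_u - k_x$ and $\dd(u',y') \ge \dd(u,y) - k_u - k_y$. For $\dd(x',y')$ I use $\dd(x',y') \le \dd(x,y) + k_x + k_y$. The crucial step is bounding $\dd(u',v')$: since $v$ (and hence $k_v$) is arbitrary, I bound $\dd(u',v') \le \ecc(u')$ and then observe $\ecc(u') \le \ecc(u) + k_u$, which avoids any dependence on $k_v$ and is the reason the hypothesis features $2\ecc(u)$ rather than $\ecc(u)$. Substituting and collecting terms, the coefficient of $k_u$ comes out to $2$, and of $k_x,k_y$ to $1$, giving
\[
2\tau(u',v',x',y') \;\le\; \ecc(u) + 2k_u + k_x + k_y + \tfrac{1}{2}\bigl(\dd(x,y) - \dd(u,x) - \dd(u,y)\bigr).
\]
Multiplying by $2$ yields $4\tau(u',v',x',y') \le 2\ecc(u) + \dd(x,y) - \dd(u,x) - \dd(u,y) + K_8$.

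Finally, combining with the hypothesis gives $4\tau(u',v',x',y') \le 4\delta_L + 1$, i.e. $\tau(u',v',x',y') \le \delta_L + 1/4$. Since distances are integers, both $\tau(u',v',x',y')$ and any meaningfully chosen lower bound $\delta_L$ are half-integers, so no half-integer can lie strictly between $\delta_L$ and $\delta_L + 1/2$; the inequality therefore sharpens to $\tau(u',v',x',y') \le \delta_L$, as desired.

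The main obstacle, in my view, is recognizing that $\dd(u',v')$ must be bounded through $\ecc(u')$ rather than through $\dd(u,v) + k_u + k_v$: the latter would introduce an unwanted $k_v$ term (which one would have to absorb into a worst-case $k$, weakening the pruning rule). Using $\ecc(u') \le \ecc(u) + k_u$ is what produces exactly $2\ecc(u)$ and the precise constant $K_8 = 4k_u + 2k_x + 2k_y$ in the hypothesis, matching the asymmetric roles of $u$ (an endpoint of the ``long'' pair together with the free vertex $v$) and of $x,y$ (the fixed pair).
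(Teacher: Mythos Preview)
Your proof is correct and follows essentially the same route as the paper's: both apply the averaging trick to $\max\{S_2',S_3'\}$, eliminate the $v'$-terms via $\dd(v',x')+\dd(v',y')\ge \dd(x',y')$, bound $\dd(u',v')$ by $\ecc(u')\le \ecc(u)+k_u$, and then shift the remaining primed distances to the dominators with triangle inequalities, arriving at $4\tau(u',v',x',y')\le 2\ecc(u)+\dd(x,y)-\dd(u,x)-\dd(u,y)+K_8$. The only cosmetic difference is that the paper phrases it as a contradiction (starting from $2\delta_L+1\le 2\tau$), whereas you argue directly and invoke half-integrality at the end; both implicitly rely on $\delta_L$ being a half-integer, which holds in the algorithm since $\delta_L$ is always obtained as a $\tau$-value of some quadruple.
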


\begin{proof}
	Given $x,y,u \in D$ satisfying $2\ecc(u) + \dd(x,y) - \dd(x,u) - \dd(y,u) + K_8 \leq 4\delta_L + 1$, we assume for the sake of contradiction that there exists $x'\in D^{-1}(x)$, $y'\in D^{-1}(y)$, $u'\in D^{-1}(u)$ and $v'\in V$ such that $\delta_L< \tau(u',v',x',y')$.

Then, 
%
%
$2\delta_L + 1 \leq 2\tau(u',v',x',y') = \dd(u',v') +\dd(x',y') - \max\{\dd(u',x')+\dd(v',y'), \dd(u',y')+\dd(v',x')\}
\leq \dd(u',v')+\dd(x',y') - \frac{1}{2}\left(\dd(u',x')+\dd(v',y') + \dd(u',y')+\dd(v',x')\right)
\leq \dd(x',y') + \ecc(u') - \frac{1}{2}\left(\dd(u',x') + \dd(u',y') + \dd(x',y')\right)
\leq \ecc(u') - \frac{1}{2}\left(\dd(u',x') + \dd(u',y') -\dd(x',y')\right)$.

Now, observe that 
we have $\ecc(u') \leq \ecc(u) +k_{u}$ by triangle inequality. Similarly, we have $ \dd(u',x') + \dd(u',y') -\dd(x',y') \geq \dd(u,x) -k_{u}-k_{x} + \dd(u,y)-k_{u}-k_{y} -\dd(x,y)-k_{x}-k_{y}$.
%
Hence, we get $2\delta_L + 1  \leq \ecc(u) + k_{u} - \frac{1}{2}(\dd(u,x) + \dd(u,y) -\dd(x,y) - 2(k_{u}+k_{x}+k_{y}))$, and so $4\delta_L + 2  \leq 2 \ecc(u) + \dd(x,y) - \dd(u,x) - \dd(u,y) + K_8$ which contradicts the hypothesis. 
\end{proof}

Similarly, by~\cite[Lemma 9]{BorassiCCM15}, we know that if $x,y,u \in V$ are such that $\ecc(u) + \dd(x,y) - 3\delta_L -\frac{3}{2} \leq \max\{\dd(u,x), \dd(u,y)\}$, then for each $v\in V$, we have $\tau(u,v,x,y) \leq \delta_L$. Hence, for each such triple we can skip the exploration of $n$ quadruples. We now adapt this result to the case $x,y,u \in D$.

\begin{lemma}\label{lem:cut_dom2}
	Given a $k$-dominating set $D$ of a graph $G$, a lower bound $\delta_L$ on the hyperbolicity of $G$ and vertices $x,y,u\in D$ satisfying $\ecc(u) + \dd(x,y) - 3\delta_L - 1 + K_4 \leq \max\{\dd(u,x)-k_{x}, \dd(u,y) - k_{y}\}$ where 
	{$K_4 = 2 k_{u} + k_{x} + k_{y}\le 4k$},
	then for each quadruple $u',v',x',y'$ with respective associated dominators $u,v,x,y$
        we have $\tau(u',v',x',y') \leq \delta_L$. 
\end{lemma}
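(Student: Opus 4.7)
The plan is to argue by contradiction, in the spirit of Lemma~\ref{lem:cut_dom1}. Suppose that there exist $u'\in D^{-1}(u)$, $v'\in V$, $x'\in D^{-1}(x)$, $y'\in D^{-1}(y)$ with $\tau(u',v',x',y') > \delta_L$. Since distances in $G$ are integers, $2\tau$ and $2\delta_L$ are both integer-valued, so in fact $\tau(u',v',x',y') \ge \delta_L + \tfrac{1}{2}$. In particular $\tau > 0$, whence $\tau(u',v',x',y') = \delta(u',v',x',y')$ and $S'_1 = \dd(u',v') + \dd(x',y')$ is the largest of the three sums.

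The key new ingredient (relative to Lemma~\ref{lem:cut_dom1}) is to apply Lemma~\ref{lem:shape} to the quadruple $u',v',x',y'$ itself. It yields $\delta(u',v',x',y') \le \min\{\dd(v',x'),\dd(v',y')\}$, so that $\dd(v',x') \ge \tau$ and $\dd(v',y') \ge \tau$. The definition of $\tau$ provides $\dd(u',x') + \dd(v',y') \le S'_1 - 2\tau$ and $\dd(u',y') + \dd(v',x') \le S'_1 - 2\tau$; substituting the lower bounds on $\dd(v',x')$ and $\dd(v',y')$ then gives
\[
\max\{\dd(u',x'),\, \dd(u',y')\} \le \dd(u',v') + \dd(x',y') - 3\tau.
\]
This extra factor of $-\tau$ (one more than what the approach of Lemma~\ref{lem:cut_dom1} produces) is precisely what converts a $2\delta_L$-type threshold into a $3\delta_L$-type one.

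It remains to transfer the primed quantities to $u,v,x,y$ via the standard triangle inequalities: $\dd(u',v') \le \ecc(u') \le \ecc(u) + k_u$, $\dd(x',y') \le \dd(x,y) + k_x + k_y$, and $\dd(u',a') \ge \dd(u,a) - k_u - k_a$ for $a \in \{x,y\}$. Plugging these in along with $\tau \ge \delta_L + \tfrac{1}{2}$, we obtain
\[
\max\{\dd(u,x) - k_x,\, \dd(u,y) - k_y\} \le \ecc(u) + \dd(x,y) + (2k_u + k_x + k_y) - 3\delta_L - \tfrac{3}{2},
\]
i.e.\ $\max\{\dd(u,x)-k_x, \dd(u,y)-k_y\} \le \ecc(u) + \dd(x,y) + K_4 - 3\delta_L - \tfrac{3}{2}$. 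Combined with the hypothesis, which asserts $\max\{\dd(u,x)-k_x,\dd(u,y)-k_y\} \ge \ecc(u) + \dd(x,y) + K_4 - 3\delta_L - 1$, this forces $-1 \le -\tfrac{3}{2}$, a contradiction.

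The step I expect to be the most delicate is the bookkeeping in the final transfer: the asymmetric form $\max\{\dd(u,x)-k_x,\dd(u,y)-k_y\}$ used in the hypothesis only emerges after factoring a common $-k_u$ out of $\max\{\dd(u,x)-k_u-k_x,\dd(u,y)-k_u-k_y\}$. This factorization, together with the $k_u$ contribution from $\ecc(u') \le \ecc(u)+k_u$, is exactly what produces the coefficient $2$ on $k_u$ in $K_4 = 2k_u + k_x + k_y$, while $k_x$ and $k_y$ only appear with coefficient $1$.
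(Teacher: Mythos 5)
Your proof is correct and follows essentially the same route as the paper: contradiction, the bound $2\tau = S_1' - \max\{S_2',S_3'\}$ combined with the Lemma~\ref{lem:shape} lower bound on the distances from $v'$ (the paper uses $\dd(v',y')\ge \delta_L+\tfrac12$ directly where you package it as $\ge\tau$, yielding the same $3\tau$ gain), followed by the same triangle-inequality transfer to the dominators. Your version is in fact written more cleanly and symmetrically than the paper's (which handles $x$ first and then swaps roles), but the argument is the same.
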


\begin{proof}


	Given $x,y,u \in D$ satisfying $\ecc(u) + \dd(x,y) - 3\delta_L - 1 + K_4 \leq \max\{\dd(u,x)-k_{x}, \dd(u,y) - k_{y}\}$, we assume for the sake of contradiction that there exists $x'\in D^{-1}(x)$, $y'\in D^{-1}(y)$, $u'\in D^{-1}(u)$ and $v'\in V$ such that $\delta_L< \tau(u',v',x',y')$ or equivalently $2\delta_L + 1 \leq  2\tau(u',v',x',y')$.\\
By Lemma~\ref{lem:shape}, we have $\dd(u',y') > \delta_L$, that is $\dd(u',y') \geq \delta_L + \frac{1}{2}$.
Consequently, we get $2\delta_L + 1 \leq  2\tau(u',v',x',y') = \dd(u',v')+\dd(x',y') - \max\{\dd(u',x')+\dd(v',y'), \dd(u',y')+\dd(v',x')\} \leq \dd(u',v')+\dd(x',y') - \dd(u',x') - \dd(u',y') \leq \ecc(u') + \dd(x',y') - \dd(u',x') - \delta_L - \frac{1}{2}$. \\
Since $\ecc(u') \leq \ecc(u) + k_{u}$, $\dd(x', y') \leq \dd(x,y) + k_{x}+k_{y}$ and $\dd(u',x') \geq \dd(u,x) -k_{u} - k_{x}$, we obtain 
$2\delta_L + 1 \leq \ecc(u) + k_{u} + \dd(x,y) + k_{x} + k_{y} - \dd(u,x) + k_{u} + k_{x} - \delta_L - \frac{1}{2}$, that is $\dd(u,x) - k_{x}\leq \ecc(u) + \dd(x,y) -3\delta_L - \frac{3}{2} + K_4$. 
By exchanging the role of $x'$ and $y'$, we similarly get $\dd(u,y) - k_{y} \leq \ecc(u) + \dd(x,y) -3\delta_L - \frac{3}{2} +K_4$.
These two inequalities contradict the hypothesis.
\end{proof}

We use Lemmas~\ref{lem:shape_dom}, \ref{lem:cut_dom1} and~\ref{lem:cut_dom2} to identify \emph{acceptable} and \emph{skippable} vertices. More precisely,
\begin{definition}\label{def:acceptable_dom}
Given a $k$-dominating set $D$ of a graph $G$, a lower bound $\delta_L$ on the hyperbolicity of $G$ and a pair $x,y\in D$ of vertices, we say that a vertex $u\in D$ is \emph{acceptable} if it satisfies all the following conditions. Otherwise, it is \emph{skippable}.
\begin{enumerate}
    \item $\dd(u,x) +k_{u}+k_{x} > \delta_L$ and $\dd(u,y) + k_{u}+k_{y}> \delta_L$ (Lemma~\ref{lem:shape_dom});
    \item $2\ecc(u) + \dd(x,y) - \dd(u,x) - \dd(u,y) + 4k_{u}+2k_{x}+2k_{y} > 4\delta_L + 2$ (Lemma~\ref{lem:cut_dom1});
    \item $\ecc(u) + \dd(x,y) - 3\delta_L -\frac{3}{2} + 2 k_{u} + k_{x} + k_{y} > \max\{\dd(u,x)-k_{x}, \dd(u,y)-k_{y}\}$ (Lemma~\ref{lem:cut_dom2}).
\end{enumerate}
\end{definition}

Thanks to Definition~\ref{def:acceptable_dom}, we can avoid exploring all quadruples for which either $u$ or $v$ is skippable, thus saving a significant amount of computations. To further reduce the search space, \cite{BorassiCCM15} introduce the notion of $c$-valuable vertices (or simply valuable vertices), where $c\in V$ is an arbitrarily chosen vertex, for instance a vertex with small centrality or eccentricity.

\begin{lemma}\label{lem:valuable}
Given a $k$-dominating set $D$ of a graph $G$, a pair $x,y\in D$ of vertices and any fixed node $c\in V$, define $f_c(z)= \frac{1}{2}\left( \dd(x,y) - \dd(x,z) - \dd(y,z)\right) + \dd(z,c) + 2k_z + k_{x} + k_{y}$ for any $z\in D$.  Then, for each pair $u,v \in D$ and each quadruple $u',v',x',y'$ with respective associated dominators $u,v,x,y$, we have $2\tau(u',v',x',y') \leq f_c(u) + f_c(v)$.
\end{lemma}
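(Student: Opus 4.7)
The plan is to prove this in two main stages: first, derive the bound $2\tau(u',v',x',y') \leq g_c(u') + g_c(v')$ for an analogous function $g_c$ defined on individual vertices of $V$ (mirroring the Borassi et al.\ argument without domination), and then push this inequality through the triangle inequality to relate $g_c$ at the dominated vertices $u', v'$ to $f_c$ at the dominators $u, v$.

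For the first stage, I would start from the definition
\[
2\tau(u',v',x',y') = \dd(u',v') + \dd(x',y') - \max\{\dd(u',x')+\dd(v',y'),\ \dd(u',y')+\dd(v',x')\},
\]
and upper bound the $\max$ by the average of its two arguments. This gives
\[
2\tau(u',v',x',y') \leq \dd(u',v') + \dd(x',y') - \tfrac{1}{2}\bigl(\dd(u',x')+\dd(u',y')+\dd(v',x')+\dd(v',y')\bigr).
\]
Then the triangle inequality $\dd(u',v') \leq \dd(u',c) + \dd(v',c)$ at the pivot $c$ lets me split the right-hand side symmetrically between $u'$ and $v'$, obtaining $2\tau(u',v',x',y') \leq g_c(u') + g_c(v')$ where
\[
g_c(z') \coloneqq \dd(z',c) + \tfrac{1}{2}\bigl(\dd(x',y') - \dd(x',z') - \dd(y',z')\bigr).
\]

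For the second stage, I would show $g_c(u') \leq f_c(u)$ (the bound for $v'$ is identical). This requires four applications of the triangle inequality using the domination radii: $\dd(u',c) \leq \dd(u,c) + k_u$, then $\dd(x',y') \leq \dd(x,y) + k_x + k_y$ in the numerator, and $\dd(x',u') \geq \dd(x,u) - k_x - k_u$ together with $\dd(y',u') \geq \dd(y,u) - k_y - k_u$ in the subtracted terms. Collecting the slack contributes $k_u$ from the $\dd(u',c)$ bound and another $\frac{1}{2}(k_x + k_y + k_x + k_u + k_y + k_u) = k_x + k_y + k_u$ from the $\frac{1}{2}(\cdots)$ term, for a total of $2k_u + k_x + k_y$ added on top of the unprimed expression. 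This is exactly $f_c(u)$.

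Neither stage is truly hard: the first is a standard Borassi-style averaging trick and the second is bookkeeping of triangle inequalities. The only subtle point is making sure the signs of the $k$-radii line up correctly so that every slack term enters with a positive sign into $f_c$; the key observation is that in $g_c(u')$ both $\dd(u',c)$ and $\dd(x',y')$ appear with a $+$ sign while $\dd(x',u'), \dd(y',u')$ appear with a $-$ sign, so all four triangle inequalities push $g_c(u')$ \emph{upward} to $f_c(u)$, as required.
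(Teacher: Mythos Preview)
Your proof is correct and follows essentially the same approach as the paper: both replace the $\max$ by the average of its two arguments, split $\dd(u',v')$ through the pivot $c$, and then pass from primed to unprimed distances via the domination-radius triangle inequalities, collecting the same total slack $2k_u+2k_v+2k_x+2k_y$. The only difference is presentational: you introduce the intermediate function $g_c$ and separate the argument into two stages, whereas the paper performs the substitution in a single step.
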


\begin{proof}
We have $2\tau(u',v',x',y')
= \dd(u',v')+\dd(x',y') - \max\{\dd(u',x')+\dd(v',y'), \dd(u',y')+\dd(v',x')\}
\leq \dd(u',v')+\dd(x',y') - \frac{1}{2}\left(\dd(u',x')+\dd(v',y') + \dd(u',y')+\dd(v',x')\right)$.
Furthermore, we have $\dd(u',v') \leq \dd(u',c) + \dd(v',c)$, $\dd(u',c) \leq \dd(u,c) + k_{u}$ and $\dd(v',c) \leq \dd(v,c) + k_{v}$.
Hence, we get $2\tau(u',v',x',y') \leq \dd(x,y)+\dd(u,c) + \dd(v,c) - \frac{1}{2}\left(\dd(u,x)+\dd(u,y) + \dd(v,x)+\dd(v,y) \right) + 2k_{u} + 2k_{v} + 2k_{x} + 2k_{y} = f_c(u) + f_c(v)$.
\end{proof}

As a consequence of Lemma~\ref{lem:valuable}, if $2\tau(u,v,x,y) + 2(k_{u} + k_{v} + k_{x} + k_{y}) > 2\delta_L$, then either $f_c(u) > \delta_L$ or $f_c(v) > \delta_L$.

\begin{definition}\label{def:valuable_dom}
Given a $k$-dominating set $D$ of a graph $G$, a lower bound $\delta_L$ on the hyperbolicity of $G$ and a pair $x,y\in D$ of vertices, we say that an acceptable vertex $u\in D$ is \emph{$c$-valuable} (or simply \emph{valuable}) if $f_c(u) > \delta_L$.
\end{definition}

Summarizing, if $\tau(u,v,x,y) > \delta_L$, then $u$ and $v$ are both acceptable and at least one of them is furthermore valuable.

Observe that, for a given pair $(x,y)\in D^2$, deciding if a vertex $u\in D$ is skippable, acceptable or valuable requires a constant number of operations, assuming that distances and eccentricities are known. We will discuss this aspect in \S~\ref{sec:implementation_notes}.

\subsection{Algorithm.}

We can now present an exact algorithm (Algorithm~\ref{alg:hyp_dom_DFS}) that exploits the notion of $k$-domination to prune the search space and significantly reduce the memory usage compared to the algorithms proposed in~\cite{BorassiCCM15,CohenCL15,coudert:hal-03201405}.
Algorithm~\ref{alg:hyp_dom_DFS} takes as input a connected graph $G$ and a sequence $k_i, k_{i-1},\cdots, k_0$ of domination distances, with $i\geq 0$ and $k_i = k > k_{i-1} >  \cdots > k_0 = 0$.

The general principle of the algorithm is to first scan the quadruples of $D_i$ {where $D_i$ is a $k_i$-dominating set}. Then, thanks to Lemma~\ref{lem:dominating}, we know that if a quadruple $u,v,x,y\in D_i$ is such that 
$\delta(u,v,x,y) +4k > \delta_L$, where $\delta_L$ is the current lower bound on $\delta(G)$, it may dominate a quadruple $u',v',x',y'$ such that $\delta(u',v',x',y') > \delta_L$.
In this case, instead of exploring directly all the quadruples of $V$ that are dominated by $u,v,x,y$, we scan the quadruples made of the nodes of a $k_{i-1}$-dominating set $D_{i-1}$ that are dominated by $u,v,x,y$. 
{We proceed similarly for smaller and smaller domination radii $k_j$ with $0<j<i$ as long as the condition of Lemma~\ref{lem:dominating} is satisfied for $k_j$.}
When a quadruple does not satisfies the conditions of Lemma~\ref{lem:dominating}, we prune the exploration of all dominated quadruples. We also use Lemma~\ref{lem:shape_dom} and Definitions~\ref{def:acceptable_dom} and~\ref{def:valuable_dom} to prune more quadruples, {generalizing in some extend the approach of~\cite{BorassiCCM15}}.

Let us now present the algorithm in details.

The first step of the algorithm is the construction of a hierarchical dominating set (line~\ref{line:DFS:HDS} of Algorithm~\ref{alg:hyp_dom_DFS}). This is done as follows. 
We compute a $k_i$-dominating set $D_i$ for $G$, using a greedy algorithm (see \S~\ref{sec:implementation_notes}). We associate to each vertex $v$ the closest vertex $D_i(v)$ that dominates it and let $V_{i,u}=D_i^{-1}(u)$ denote the set of vertices associated to $u$ for each $u\in D_i$. Note that $\{V_{i,u} : u\in D_i\}$ forms a partition of $V$. We let $k_{i,u}=\max_{v\in V_{i,u}}\dd(u,v)$ denote the domination radius of $u\in D_i$. Obviously, we have $k_{i,u} \leq k_i$. We proceed similarly for each $j < i$ as follows. Given a subset $V'\subseteq V$ we define a \emph{$k$-dominating set} of $V'$ as a set $D\subseteq V'$ such that each vertex $v\in V'$ is dominated by some vertex $u\in D$ such that $\dd(u,v)\le k$. Given a partition of $V$ into sets $V_{j+1,u}$ for $u\in D_{j+1}$, we compute a $k_j$-dominating set $D_{j,u}$ of $V_{j+1,u}$ for each $u\in D_{j+1}$. Each vertex $v'\in V_{j+1,u}$ is associated to its closest dominator $D_{j,u}(v')$ in $D_{j,u}$. We then let $V_{j,u'}=D_{j,u}^{-1}(u')$ denote the set of vertices associated to $u'$ for each $u'\in D_{j,u}$ and let $k_{j,u'}=\max_{v'\in V_{j,u'}}d(u',v')$ denote its domination radius. We then obtain a new partition of $V$ as $\{V_{j,u'} : u'\in D_j\}$ where $D_j=\cup_{u\in D_{j+1}}D_{j,u}$. As $k_0=0$, this process ends with $D_{0,u}=V_{1,u} = \{u\}$ for each $u\in D_1$ and $D_0=V$. 


Next, Algorithm~\ref{alg:hyp_dom_DFS} computes the distances between all pairs of vertices of the $k$-dominating set $D_i$ and stores these distances in matrix $\dd$ (line~\ref{line:DFS:distancematrix}).  The construction of $\dd$ can be done in time $\OO(|D_i|(n + m))$ using BFS and uses spaces $\OO(|D_i|^2)$. We assume here that $|D_i|$ is small enough to ensure that matrix $\dd$ can fit into memory.

Algorithm~\ref{alg:hyp_dom_DFS} uses the notion of mates, introduced in~\cite{BorassiCCM15}. Roughly, $v$ is a mate of $u$ if $\dd(u, v) \geq \dd(x, y)$ and the pair $(u, v)$ has been considered before the pair $(x,y)$ that is currently considered. Hence, when all quadruples involving a pair $(x,y)$ have been considered, $x$ becomes a mate of $y$ and $y$ becomes a mate of $x$. This notion is used line~\ref{line:DFS:for_mates} of the algorithm to reduce the number of pairs $(u, v)$ to consider. The maintenance of the mates requires overall time and space in $\OO(|D_i|^2)$.

The main part of the algorithm is lines~\ref{line:DFS:for_xy} to~\ref{line:DFS:update_mates} which uses the best value found so far ($\delta_L$) and the bounds of Lemma~\ref{lem:shape_dom} to stop computations as soon as possible. More precisely, the algorithm considers all pairs $(x,y)$ of vertices in $D_i$ sorted by non-increasing distance. This ordering is a consequence of Lemma~\ref{lem:shape_dom}. Indeed, if a pair $(x,y)$ is such that $\dd(x,y) + 2k_i \leq 2\delta_L $, then we also have $\dd(x',y') + 2k_i \leq 2\delta_L $ for all pairs $(x',y')$ such that $\dd(x',y') \leq \dd(x,y)$.
Hence, we can stop computation and return the result as soon as the condition is satisfied (line~\ref{line:DFS:stop_shape}).
Furthermore, Lemma~\ref{lem:shape_dom} also ensures that for a pair $(x,y)$ such that $\dd(x,y) + k_{i,x} + k_{i,y} \leq 2\delta_L$, we have $\tau(u, v, x, y) \leq \delta_L$ for all pairs $(u,v)$, and so we can skip computations involving this pair $(x,y)$ (line~\ref{line:DFS:skip_xy}).

Then, the algorithm determines, through a call to method \computeAccVal($x,y,D_i$), the acceptable and valuable vertices of $D_i$ for pair $(x, y)$  according Definitions~\ref{def:acceptable_dom} and~\ref{def:valuable_dom} (line~\ref{line:DFS:accval}). 
It simply consists in a linear scan of $D_i$, as explained in \S~\ref{sec:implementation_notes}.
The algorithm then considers all quadruples $(u, v, x, y)$ made of the pair $(x, y)$, a valuable vertex $u$ and an acceptable vertex $v$ that is also a mate of $u$.
The algorithm first checks whether this quadruple enables to improve the lower bound $\delta_L$ (line~\ref{line:DFS:improve_LB}). Then, it checks using Lemma~\ref{lem:dominating} whether this quadruple may dominates another quadruple that could help improving the bound (lines~\ref{line:DFS:explore1}-\ref{line:DFS:explore2}). If this is the case, it calls method \texttt{explore} (Algorithm~\ref{alg:hyp_dom_DFS_rec}) that we  describe next.

\begin{algorithm}
\caption{Compute hyperbolicity using $k$-domination.}
\label{alg:hyp_dom_DFS}
\begin{algorithmic}[1]
\Require $G = (V, E)$, a connected graph
\Require $k_i = k > k_{i-1} > \cdots > k_0=0$, a sequence of domination distances
\State ${\cal H} = \{(D_j,(D_{j-1,u})_{u\in D_j},k_j),\ 0\leq j\leq i\} \gets \HRS(G, (k_i,k_{i-1},\ldots,0))$ \label{line:DFS:HDS}
\State Let $\dd$ be the distance matrix between the vertices in $D_i$ \label{line:DFS:distancematrix}
\State $\delta_L \gets 0$
\State \mates$[v] \gets \emptyset$ for each $v\in D_i$ \label{line:DFS:init_mates}
\ForAll{$(x,y) \in D_i^2$ sorted by non increasing distances \label{line:DFS:for_xy}}
    \If{$\dd(x,y) + 2k_i \leq 2\delta_L $} \label{line:DFS:stop_shape}
        \State \textbf{return} $\delta_L$
    \EndIf
    \If{$\dd(x,y) + k_{i,x} + k_{i,y} \leq 2\delta_L $} \label{line:DFS:skip_xy}
        \State continue
    \EndIf
    \State (\texttt{acceptable}, \texttt{valuable}) $\gets$ \label{line:DFS:accval}
    \computeAccVal($x,y,D_i$) 
    \For{$u \in $ \upshape\texttt{valuable}}
        \For{$v \in $ \upshape\texttt{mates}[$u$]} \label{line:DFS:for_mates}
            \If{$v \in$ \upshape\texttt{acceptable}}
                \State $\delta_L\gets\max\{\delta_L, \tau(u,v,x,y)\}$ \label{line:DFS:improve_LB}
                \State $\delta_{\textsc{ub}}\gets \tau(u,v,x,y) + \sum_{z\in\{u,v,x,y\}} k_{i,z}$
                \If{$\delta_{\textsc{ub}} > \delta_L$ \label{line:DFS:explore1}}
                    \State $\delta_L\gets \max\{\delta_L, $   \label{line:DFS:explore2}
                    $\texttt{explore}(u,v,x,y, {\cal H}, i-1)\}$  
                \EndIf
            \EndIf
        \EndFor
    \EndFor
    \State add $y$ to \texttt{mates}[$x$] 
    \State add $x$ to \texttt{mates}[$y$] \label{line:DFS:update_mates}
\EndFor
\State \textbf{return} $\delta_L$
\end{algorithmic}
\end{algorithm}

Given a quadruple $(u,v,x,y)$ of vertices in $D_{j+1}$,
method \texttt{explore} (Algorithm~\ref{alg:hyp_dom_DFS_rec}) considers all pairs $(x', y')$ such that $x'\in D_{j,x}$ and $y' \in D_{j,y}$. Recall that $D_{j, x}$ is the set of vertices of a distance $k_{j-1}$ dominating set of the set of vertices $V_{j,x}$ that are dominated by $x$ in a distance $k_j$ dominating set.
Thanks to Lemma~\ref{lem:shape_dom}, the algorithm do not consider the pairs $(x', y')$ that can not help improving the lower bound (line~\ref{line:DFSrec:skip}).

Next, Algorithm~\ref{alg:hyp_dom_DFS_rec} determines the subsets of vertices of $D_{j,u}$ and $D_{j,v}$ that are acceptable or valuable for $(x',y')$ according Definitions~\ref{def:acceptable_dom} and~\ref{def:valuable_dom} (lines~\ref{line:DFSrec:acc1} and~\ref{line:DFSrec:acc2}). 
Then, it considers the quadruples $(u',v',x',y')$ such that either $u'$ is a valuable vertex of $D_{j,u}$ and $v'$ an acceptable vertex of $D_{j,v}$, or $u'$ is an acceptable vertex of $D_{j,u}$ and $v'$ a valuable vertex of $D_{j,v}$.
For each quadruple $(u',v',x',y')$, it checks using Lemma~\ref{lem:dominating} whether this quadruple may dominate another quadruple that could help improving the bound (line~\ref{line:DFSrec:rec1}). If this is the case, it performs a recursive call to \texttt{explore} to consider the dominated quadruples in a distance $k_{j-1}$ dominating set (line~\ref{line:DFSrec:rec2}).

\begin{algorithm}[t]
\caption{Method {\upshape\texttt{explore}}$(u,v,x,y,{\cal H},j)$.}
  \label{alg:hyp_dom_DFS_rec}
\begin{algorithmic}[1]
\Require $G = (V, E)$, a connected graph
\Require $j$, the current index for domination
\Require $u,v,x,y\in D_{j+1}$, a quadruple of $G$
\Require ${\cal H}= \{(D_{j'},(D_{j'-1,u})_{u\in D_{j'}},k_{j'}),\ 0\leq j'\leq i\}$, hierarchical dominating set of $G$
\If{$j = -1$} 
    \State \textbf{return} $\tau(u,v,x,y)$
\EndIf
\For{$x'\in D_{j,x}$}
    \For{$y' \in D_{j,y}$}
        \If{$\dd(x',y') + k_{j-1,x'} + k_{j-1,y'} \leq 2\delta_L $} \label{line:DFSrec:skip}
            \State continue
        \EndIf
        \State (\texttt{acceptable}$(u)$, \texttt{valuable}$(u)$) $\gets$ \label{line:DFSrec:acc1}
        \computeAccVal($x', y', D_{j,u}$) 
        \State (\texttt{acceptable}$(v)$, \texttt{valuable}$(v)$) $\gets$ \label{line:DFSrec:acc2}
        \computeAccVal($x', y', D_{j,v}$) 
        \For{$(a, b) \in \{(u,v), (v,u)\}$}
            \For{$u'\in $ \upshape\texttt{valuable}$(a)$}
                \For{$v'\in $ \upshape\texttt{acceptable}$(b)$}
                    \State $\delta_L\gets\max\{\delta_L, \tau(u',v',x',y')\}$
                    \State $k_4 \gets \sum_{z'\in\{u',v',x',y'\}} k_{j-1,z'}$
                    \If{$\tau(u',v',x',y') + k_4 > \delta_L$ \label{line:DFSrec:rec1}}
                        \State $\delta_L \gets \max\{ \delta_L,$ \label{line:DFSrec:rec2}
                        \texttt{explore}$(u',v',x',y', {\cal H}, j-1) \}$
                    \EndIf
                \EndFor
            \EndFor
        \EndFor
    \EndFor
\EndFor
\State \textbf{return} $\delta_L$
\end{algorithmic}
\end{algorithm}

\begin{observation} \em 
When the domination distance is zero ($k = 0$), the test on line~\ref{line:DFS:explore1} of Algorithm~\ref{alg:hyp_dom_DFS} will always be false, and so the algorithm is almost the one proposed in~\cite{BorassiCCM15}. Indeed, the only difference is that the algorithm of~\cite{BorassiCCM15} considers the list of \emph{far-apart pairs} instead of the list of all pairs, which helps further pruning the search space.
Roughly, $u$ is far from $v$ if $\dd(u,v) \geq \dd(u, w)$ for all $w\in N(v)$, and the pair $(u, v)$ is far-apart if $u$ is far from $v$ and $v$ is far from $u$. It has been proved in~\cite{SotoGomez2011} that any graph contains two far-apart pairs $(u,v)$ and $(x,y)$ satisfying $\delta(u,v,x,y) = \delta(G)$.
An interesting open question is therefore how to use the notion of far-apart pairs in Algorithms~\ref{alg:hyp_dom_DFS} and~\ref{alg:hyp_dom_DFS_rec} to further prune the search space.
\end{observation}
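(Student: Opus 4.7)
The observation is largely structural, and its provable kernel is a single claim: when $k=0$ the upper bound test on line~\ref{line:DFS:explore1} of Algorithm~\ref{alg:hyp_dom_DFS} never fires, so the call to \texttt{explore} is never executed and the algorithm collapses to a single-level scan of $V$ that matches the algorithm of~\cite{BorassiCCM15} up to the far-apart optimization. My plan is to verify this kernel by unfolding the definition of $k$-domination in the degenerate case, and then to match the resulting loop body against the pseudocode of~\cite{BorassiCCM15} line by line.

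First I would observe that when $k_i = k = 0$, the defining condition of a $k_i$-dominating set forces $D_i = V$, and the associated dominator of each $v \in V$ must be $v$ itself, so the domination radius satisfies $k_{i,z} = 0$ for every $z \in D_i$. Propagating this into the hierarchy $\mathcal{H}$ built on line~\ref{line:DFS:HDS}, every $D_j$ is equal to $V$ and every $k_{j,z}$ vanishes. Consequently the upper bound computed just after line~\ref{line:DFS:explore1} becomes $\delta_{\textsc{ub}} = \tau(u,v,x,y) + \sum_{z\in\{u,v,x,y\}} 0 = \tau(u,v,x,y)$. But the immediately preceding assignment on line~\ref{line:DFS:improve_LB} has just set $\delta_L \gets \max\{\delta_L,\tau(u,v,x,y)\}$, so $\delta_L \geq \tau(u,v,x,y) = \delta_{\textsc{ub}}$, and the strict inequality $\delta_{\textsc{ub}} > \delta_L$ cannot hold. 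The recursive call on line~\ref{line:DFS:explore2} is therefore never triggered.

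It remains to check that the surviving loop body is exactly the one of~\cite{BorassiCCM15} up to the use of far-apart pairs. With $D_i = V$ and all $k_{i,z}=0$, the early termination condition on line~\ref{line:DFS:stop_shape} and the skipping condition on line~\ref{line:DFS:skip_xy} both reduce to $\dd(x,y) \leq 2\delta_L$, which is the test used in~\cite{BorassiCCM15}. Substituting zero radii into Definitions~\ref{def:acceptable_dom} and~\ref{def:valuable_dom} removes every $k_{u}, k_{v}, k_{x}, k_{y}$ term and recovers verbatim the acceptable/valuable predicates of~\cite{BorassiCCM15} (this is by design: our definitions were obtained by adding additive dominator slack to those predicates). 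The ordering of pairs by non-increasing distance and the \mates{} mechanism are lifted unchanged from~\cite{BorassiCCM15}. Hence once the recursive branch is dead, the algorithm iterates over precisely the same candidate quadruples and performs the same updates, the only residual discrepancy being that the outer loop ranges over all pairs in $V^2$ rather than over the smaller list of far-apart pairs, as stated in the observation.

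The main obstacle is not analytic but notational: one has to match our three-clause definition of \emph{acceptable} and our formula for \emph{valuable} against the corresponding definitions in~\cite{BorassiCCM15}, which I would handle by a short remark noting that each clause of Definition~\ref{def:acceptable_dom} and the function $f_c$ of Lemma~\ref{lem:valuable} degenerate to the cited formulas when all $k$-radii are zero. The closing sentence of the observation about far-apart pairs is posed as an open question, and so requires no proof.
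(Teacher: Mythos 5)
Your proposal is correct and follows essentially the same reasoning the paper relies on: the observation is justified in the paper only by the inline claim that with $k=0$ all domination radii vanish, so $\delta_{\textsc{ub}}=\tau(u,v,x,y)\leq\delta_L$ after the preceding update and the test on the recursion guard can never succeed, which is exactly the kernel you verify. Your additional line-by-line matching of the degenerate pruning conditions against those of Borassi et al.\ is a harmless (and welcome) elaboration of the same argument, not a different route.
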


\begin{observation} \em 
Since $\max_{a',b'\in V_{i,u}} \dd(a',b') \leq 2k$ for all $u\in D_i$, we know by Lemma~\ref{lem:shape_dom} that, when $\delta_L\geq 2k$,  Algorithm~\ref{alg:hyp_dom_DFS} has to consider only quadruples of $D_i$ composed of four distinct nodes. However, when $\delta_L < 2k$, it is necessary to extend the list of pairs $(x,y)$ to consider line~\ref{line:DFS:for_xy} of Algorithm~\ref{alg:hyp_dom_DFS} by the list of pairs $\{(x,x) : x\in D_i\}$ in order to also consider quadruples of $D_i$ with two or more identical nodes.
\end{observation}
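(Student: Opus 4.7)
The observation is really a correctness refinement of Algorithm~\ref{alg:hyp_dom_DFS}, and I would prove it by two direct appeals to Lemma~\ref{lem:shape_dom}, followed by a short inspection of the outer loop.

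For the first half, suppose $u,v,x,y \in D_i$ are not four distinct nodes, so two of them coincide; call this shared dominator $a \in D_i$ (all cases are symmetric). Then, taking both indices in the minimum of Lemma~\ref{lem:shape_dom} to equal $a$, we get
$\min_{a,b\in\{u,v,x,y\}} k_{i,a}+k_{i,b}+\dd(a,b) \leq 2k_{i,a}+0 \leq 2k.$
Hence for every quadruple $u',v',x',y'$ with associated dominators $u,v,x,y$ we have $\delta(u',v',x',y') \leq 2k$. When $\delta_L \geq 2k$, this is at most $\delta_L$, so such a quadruple cannot improve the current lower bound and may be skipped. Consequently Algorithm~\ref{alg:hyp_dom_DFS} is safe to restrict to ordered 4-tuples of dominators that are pairwise distinct.

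For the second half, if $\delta_L < 2k$ the bound above is no longer useful, so a true maximising quadruple $u',v',x',y'$ realising $\delta(G)$ may well have coinciding dominators. Since the outer loop at line~\ref{line:DFS:for_xy} iterates over ordered pairs $(x,y) \in D_i^2$ and the choice of $(x,y)$ pins down the dominators of $x',y'$, the case $x = y$ is visited only if we explicitly include the pairs $\set{(x,x) : x \in D_i}$ in the enumeration. Extending the iteration list with these diagonal pairs is therefore necessary; once added, the inner loops over valuable/acceptable vertices and the mates mechanism (together with the other iterated pair $(x,y)$ with $x\ne y$ handling the remaining identity types) visit every quadruple of dominators with one or more repetitions.

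\textbf{Main obstacle.} The delicate point is the last sentence: arguing that only the diagonal extension $(x,x)$ is needed requires checking that the other possible identities among dominators---$u=v$, $u=x$, $v=y$, and so on---are already exposed by the existing inner loops and by \texttt{mates} once the $(x,x)$ pairs are processed. The geometric content is entirely captured by Lemma~\ref{lem:shape_dom}; what takes care is the case analysis matching each identity pattern to the precise loop step of Algorithm~\ref{alg:hyp_dom_DFS} that enumerates it.
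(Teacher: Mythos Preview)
The paper states this observation without a separate proof; the only justification given is the embedded appeal to Lemma~\ref{lem:shape_dom} together with the diameter bound $\max_{a',b'\in V_{i,u}}\dd(a',b')\le 2k$. Your argument for the first half is exactly this reasoning made explicit, and it is correct: when two of the dominator labels coincide at some $a$, the first inequality of Lemma~\ref{lem:shape_dom} specialises to $2k_{i,a}\le 2k$, so no dominated quadruple can exceed $\delta_L$ once $\delta_L\ge 2k$.

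For the second half you actually go further than the paper, which simply asserts that the diagonal pairs must be appended. Your identification of the main obstacle is on point: showing that adding only the $(x,x)$ pairs is \emph{sufficient} to expose every repetition pattern among dominators is not addressed in the paper and does require the case analysis you outline. The check goes through using the symmetries $\tau(u,v,x,y)=\tau(x,y,u,v)=\tau(u,v,y,x)$ together with the fact that diagonal pairs have distance~$0$ and are therefore processed last, so that by the time $(x,x)$ is reached every non-diagonal pair has already been inserted into \texttt{mates}. One small slip: you describe the outer loop as ranging over ``ordered pairs $(x,y)\in D_i^2$'', but if that were literally the case the diagonal would already be present; the point of the observation is precisely that the implemented enumeration omits it, so your necessity argument should start from the off-diagonal pair list.
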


\subsection{Implementation notes.} \label{sec:implementation_notes}

We now discuss how to efficiently implement Algorithms~\ref{alg:hyp_dom_DFS} and~\ref{alg:hyp_dom_DFS_rec}.

Let us start with the method used to create the dominating sets. We use a greedy approach where we order the nodes decreasingly according to their degrees and then, in this order, add a node to the dominating set if and only if it is not dominated yet for the considered radius. During this process, we ensure that a dominated vertex is attached to the closest vertex that dominates it, and we record for each vertex of the dominating set its domination radius (i.e., largest distance to a vertex it dominates). While we tried different strategies to construct dominating sets (e.g., order the vertices by increasing/decreasing degree/eccentricity, ensure that the vertices dominated by a vertex induce a connected subgraph, etc.), we could not observe large performance changes on the hyperbolicity computation. In particular, we do not believe that this part of the algorithm is worthwhile optimizing further.

Let us now consider the data used in Algorithm~\ref{alg:hyp_dom_DFS}, and in particular for the classification of vertices as skippable, acceptable or valuable. This step of the algorithm requires the knowledge of the eccentricities of the vertices (Definition~\ref{def:acceptable_dom}), of the distances from a central vertex (Lemma~\ref{lem:valuable}), of the effective domination distance of each considered vertex (given by the hierarchical dominating set), and of the distances between the considered vertices.
Firstly, it is possible to efficiently compute the eccentricities of the vertices using the algorithms proposed in~\cite{TakesK13,DraganHV18,LiQQ+2018} that perform a smart management of the upper and lower bounds on the eccentricity of each vertex in order to avoid computing distances from each vertex. In our implementation, we have chosen to use the algorithm proposed in~\cite{DraganHV18}.
Secondly, as noticed in~\cite{BorassiCCM15}, a good choice for the central vertex is a vertex with small eccentricity or farness (sum of the distances from a vertex to all other vertices). Since we have already computed the eccentricities of the vertices, we choose a vertex with minimum eccentricity as central vertex, and we store distances from that central vertex (computed using BFS).
Now, observe that Algorithm~\ref{alg:hyp_dom_DFS} uses the distance matrix between all the vertices in $D_i$ (line~\ref{line:DFS:distancematrix}). This matrix of distances is obtained solving a BFS from each vertex in $D_i$. Consequently, each required data for the classification of the vertices (method \computeAccVal, line~\ref{line:DFS:accval} of Algorithm~\ref{alg:hyp_dom_DFS}) can be accessed in constant time and so this classification requires a constant number of operations per vertex.


Let us continue with the data used in Algorithm~\ref{alg:hyp_dom_DFS_rec}. Apart from the eccentricities of the vertices, the distances from the central vertex and the effective domination distances that have already been computed for Algorithm~\ref{alg:hyp_dom_DFS}, we need the distances between the vertices dominated by the quadruple $u,v,x,y$. Similarly, method \computeAccVal{} linearly scans all vertices $u'$ in a dominating set and requires distances between $u'$ and a pair $x',y'$ of nodes to determine whether $u'$ is acceptable and/or valuable.
In order to get all these distances, we use a hub labeling (also called 2-hop labeling) of the graph~\cite{GavoillePPR04}. Roughly, this data structure assigns to each vertex $u$ of the graph a \emph{label}, that is list of vertices with shortest path distance from $u$ to each of these vertices. The construction of the lists ensures that for each pair of vertices $u,v\in V$, the intersection of the lists of $u$ and $v$ contains a vertex that is on a shortest path between $u$ and $v$. Then, to get the distance between $u$ and $v$, it suffices to search for the vertex minimizing the distance in the intersection of the lists. In practice, the labels are shorts (i.e., of poly-logarithmic length in the size of the graph) and so distance queries can be answered efficiently for graphs with millions of nodes. Furthermore, this data structure uses little space. Practical algorithms for building such data structure have time complexity in $\OO(nm)$, and we choose to use the algorithm proposed in~\cite{AkibaIY13} and~\cite{DellingGPW14} that combines several methods to significantly speedup the construction.

Now, since the distance between say $u'\in D_{j}^{-1}(u)$ and $v'\in D_{j}^{-1}(v)$ is used several times in Algorithm~\ref{alg:hyp_dom_DFS_rec}, we first store these distances in a matrix, thus enabling to access these distances in constant time when needed, and so to classify a vertex in constant time. More precisely, at the beginning of Algorithm~\ref{alg:hyp_dom_DFS_rec}, we build for each pair $(u,v)$, $(x,u)$, $(y,u)$, $(x,v)$, $(y,v)$ and $(x,y)$ a rectangular matrix storing the required distances (e.g., all distances between $u'\in D_{j}^{-1}(u)$ and $v'\in D_{j}^{-1}(v)$). We thus query the hub labeling data structure only once per distance.
In order to further reduce the number of queries to the hub labeling data structure, we use a cache of distance matrices. Indeed, two consecutive calls to Algorithm~\ref{alg:hyp_dom_DFS_rec} for domination index $j$ might involve a same pair $(x,y)$, and possibly a same triple $u,x,y$. This cache is implemented as a bounded size doubly linked list of matrices along with a mapping from pairs of vertices to elements of this list. This way, if the matrix needed for a pair of vertices is in the cache, we move it to the end of the doubly linked list and return it. Otherwise, we build a new matrix, insert it at the end of the list and record it in the mapping. When the maximum size of the cache is reached, we remove the first element of the list and the corresponding entry from the mapping, and then create a new matrix and insert it at the end of the list. Since we use only 7 matrices at the same time (the 6 matrices listed above and the matrix for the pair $(v, u)$), this method is safe if the size of the cache is at least 7. In practice the matrices are small and so we can store millions of them.
However, if we observe that a matrix is too large to be stored (e.g., if its side is more than 50\,000), we avoid the construction of the matrices and directly query the hub labeling data structure.

Finally, in order to further reduce the overall running time of the algorithm, we perform two pass of lines~\ref{line:DFS:init_mates}-\ref{line:DFS:update_mates} of Algorithm~\ref{alg:hyp_dom_DFS}. During the first pass, we omit lines~\ref{line:DFS:explore1}-\ref{line:DFS:explore2}, and so the calls to \texttt{explore}, and we obtain a good lower bound $\delta_L$ such that $\delta_L \leq \delta(G) \leq \delta_L + 4k$. 
Then, during the second pass of lines~\ref{line:DFS:init_mates}-\ref{line:DFS:update_mates} of Algorithm~\ref{alg:hyp_dom_DFS},  we call \texttt{explore} on the quadruples that can help improving the lower bound $\delta_L$. Since we start the second pass with a better lower bound $\delta_L$, we prune many calls to \texttt{explore}.
In addition, we also record for each pair $(x,y)$ the best found value $\tau(u,v,x,y) + k_{i,u} + k_{i,v} + k_{i,x} + k_{i,y}$ over all considered pairs $(u,v)$ of valuable and acceptable vertices during the first pass. This value is in fact an upper bound on the value $\tau(u',v',x',y')$ of any quadruple $u',v',x',y'$ having respective associated dominators $u,v,x,y$ such that $\dd(u,v) \geq \dd(x,y)$. Hence, during the second pass, we can avoid considering the pairs $(x,y)$ for which this value is less than the current lower bound $\delta_L$, and so that cannot lead to any call to \texttt{explore}.

\subsection{Correctness of the algorithm.}

The validity of the algorithms mainly follows from Lemmas~\ref{lem:dominating} and~\ref{lem:shape_dom}, and Definitions~\ref{def:acceptable_dom} and~\ref{def:valuable_dom}. 
More precisely, the validity of the pruning at line~\ref{line:DFS:explore1} in Algorithm~\ref{alg:hyp_dom_DFS} comes from Lemma~\ref{lem:dominating}. The general proof of correctness is similar to that of~\cite{BorassiCCM15} using Definitions~\ref{def:acceptable_dom} and~\ref{def:valuable_dom}, and relying on Lemmas~\ref{lem:shape_dom}, \ref{lem:cut_dom1} and~\ref{lem:cut_dom2}.
Observe that a quadruple $u,v,x,y$ may be considered several times. Indeed, it dominates itself, and so can be considered with a smaller radii during the recursive calls to method \texttt{explore}. 
However, a quadruple is considered at most $i+1$ times (i.e., the number of radii) and so the worst case time complexity is $\OO(n^4)$ when $i$ is a small constant.
The space usage is dominated by the hub labeling and the distance matrix for nodes in $D_i$. Although this could be $\OO(n^2)$ in the worst case, it is close to linear when $|D_i|=\OO(\sqrt{n})$ and the average hubset size is small.


\begin{table*}
\caption{Selected graph parameters of all the graphs that we use in our experiments. Note that for all graphs, we extracted the largest biconnected component and restrict to this subgraph in our experiments.}
\label{tab:graphs}
\centering
\begin{tabular}{|l|cc|ccc|ccc|}
\cline{4-9}
\multicolumn{3}{l|}{} & \multicolumn{3}{c|}{\textbf{eccentricity}} & \multicolumn{3}{c|}{\textbf{degree}} \\
\hline
\textbf{Graph} & \textbf{\#nodes} & \textbf{\#edges} & \textbf{radius} & \textbf{mean} & \textbf{diameter} &  \textbf{min} & \textbf{mean} & \textbf{max}\\
\hline
\hline
\texttt{NotreDame} & 134\,958 & 833\,732 & 18 & 20.99 & 36 & 2 & 12.36 & 10\,721\\
\texttt{web-Stanford} & 181\,906 &  1\,676\,077 & 46 & 48.81 & 92 & 2 & 18.43 & 35\,488 \\
\texttt{web-BerkStan} & 489\,296 & 5\,939\,242 & 46 & 49.19 & 92 & 2 & 24.27 & 80\,733\\
\hline
\texttt{t.CAL} & 1\,267\,004 & 1\,671\,989 & 1\,149 & 1\,656.89 & 2\,298 & 2 & 2.64 & 7 \\
\texttt{t.FLA} & 691\,175 & 941\,893 & 890 & 1\,378.52 & 1\,780 & 2 & 2.73 & 8\\
\texttt{roadNet-PA} & 863\,105 & 1\,313\,732 & 401 & 689.81 & 793 & 2 & 3.04 & 9 \\
\hline
\texttt{buddha} & 543\,652 & 1\,631\,574 & 244 & 360.44 & 487 & 3 & 6.0 & 17\\
\texttt{froz} & 749\,520 & 2\,895\,228 & 812 & 1\,130.38 & 1\,451 & 2 & 7.73 & 8\\
\hline
\texttt{grid300-10} & 90\,211 & 162\,152 & 300 & 450.50 & 600 & 2 & 3.6 & 4\\
\texttt{grid500-10} & 250\,041 & 449\,831 & 500 & 750.49 & 1\,000 & 2 & 3.6 & 4\\
\texttt{z-alue7065} & 34\,040 & 54\,835 & 213 & 319.43 & 426 & 2 & 3.22 & 4\\
\hline
\end{tabular}
\end{table*}

\section{Experimental Evaluation}\label{sec:experimentation}

We now turn to experimentally evaluating the algorithm presented in \S~\ref{sec:extending_borassi}. To this end, we first explain the setup of the experiments to then compare our approach to the previous state of the art. As our algorithm depends on input parameters, we subsequently evaluate how the choice of parameters affects the running time.

\subsection{Setup.}

For our experimental evaluation, we used 
web graphs (\texttt{NotreDame}, \texttt{web-BerkStan}, \texttt{web-Stanford}),
road networks (\texttt{t.CAL}, \texttt{t.FLA}, \texttt{roadNet-PA}),
a 3D triangular mesh (\texttt{buddha}), 
a grid-like graphs from VLSI applications (\texttt{z-alue7065}) 
and a graph from a computer game (\texttt{froz}).
The data is available from 
\url{snap.stanford.edu}, 
\url{www.dis.uniroma1.it/challenge9}, 
\url{graphics.stanford.edu},\break
\url{steinlib.zib.de}, 
and \url{movingai.com}. 
We also use synthetic graphs (\texttt{grid300-10}, \texttt{grid500-10}) which are square grids with respective sides $301$ and $501$ where 10\% of the edges have been randomly deleted.
Each graph is taken as an undirected unweighted graph and we consider only its largest biconnected component.
See Table~\ref{tab:graphs} for the characteristics of these graphs.
The chosen graphs have a large number of nodes compared to the graph sizes that were feasible for previous implementations that compute the graph hyperbolicity. Furthermore, our approach relies on the fact that pruning of quadruples is actually possible. Thus, the graphs that we consider do not exhibit a very low hyperbolicity --- the lowest is 8 (\texttt{NotreDame}).

We implemented our algorithms in C++ and our code can be found at~\cite{gitlabv2}.
For all except two experiments, we used a computer equipped with Intel Xeon Gold 6240 CPUs operating at 2.6GHz and 192GB RAM. The other two experiments (the very long experiments in Table~\ref{tab:experiments}) were run on a computer equipped with an Intel Core i9-10900K CPU operating at 3.7GHz and 64GB RAM. Both machines run Linux.
Furthermore, all computations were conducted using a single thread.

For the hub labeling, we use the algorithm proposed in~\cite{AkibaIY13}, except for road networks for which we use~\cite{DellingGPW14} as we observed that it produces shorter labels in this case. 
Reported running times have been measured using \texttt{/usr/bin/time -v} and include all steps of the program, from reading data from file to returning the result. Memory limits have been set to 192GB using \texttt{ulimit}.

\subsection{Parameters.} \label{sec:exp_parameters}

In Algorithm~\ref{alg:hyp_dom_DFS_rec}, we assume that a hierarchical dominating set is given as input. Especially, we require the domination distances to be given as parameters. Note that the number of possible sequences of domination distances is huge and it is impractical as parameter of an algorithm. We thus reduce the potential sequences by just requiring two parameters: the largest domination distance and the ratio by which it is reduced in each round. More formally, given a maximal domination distance $k$ and a ratio $r$, the sequence of domination distances $k_i, \dots, k_0$ of the dominating sets then is defined as
\[
	k_i = k,\quad k_{i-1} = \left\lfloor\frac{k_i}{r}\right\rfloor,\quad \dots ,\quad k_0 = \left\lfloor\frac{k_{1}}{r}\right\rfloor = 0.
\]
Note that the value of $i$ is implicitly given by the number of steps until this process reaches zero.

\subsection{Comparison with Previous Work.}

To evaluate the improvement of our algorithm over previous work, we compare to the implementation of~\cite{coudert:hal-03201405}, which was shown to outperform --- considering the trade-off between running time and memory consumption --- the algorithm of~\cite{BorassiCCM15}. See~\cite[Figures~1~and~2]{coudert:hal-03201405} for a comparison of the running time and memory consumption of~\cite{BorassiCCM15} and~\cite{coudert:hal-03201405}.
In particular, the memory consumption of~\cite{BorassiCCM15} is prohibitive for all the graph sizes considered in this work except for the graph \texttt{z-alue7065}, which we mainly use to conduct experiments with different parameter choices.

\begin{table*}
\caption{Experiments on different graphs using algorithm~\cite{coudert:hal-03201405} with time limit of 60h (which are 216\,000 seconds) and memory limit of 192GB and the algorithm proposed in this paper {(including the computation of the dominating sets and the hub labeling)}. A skull ($\skull$) in the time/memory column means the time/memory limit was reached before terminating.}
\label{tab:experiments}
\centering
  \resizebox{\linewidth}{!}{
\begin{tabular}{|l|rrr|rrrrr|}
\hline
\textbf{Graph} & \multicolumn{3}{c|}{\textbf{Algorithm~\cite{coudert:hal-03201405}}} & \multicolumn{5}{c|}{\textbf{This paper}} \\ \hline
 & \textbf{time (s)} & \textbf{memory} & \textbf{hyperb.} & \textbf{time (s)} & \textbf{memory} & \textbf{hyperb.} &   \multicolumn{2}{c|}{\textbf{Parameters}} \\
&&&&&&& dom dist & ratio \\
\hline
\hline
\texttt{NotreDame} & 4\,514 & 53.02 GB & 8.0 & 249 & 3.27 GB & 8.0 & 2 & 2 \\
\texttt{web-Stanford} & 8\,249 & 23.28 GB & 23.0 & 15 & 3.22 GB & 23.0 & 8 & 2 \\
\texttt{web-BerkStan} & 65\,134 & 76.93 GB & 23.0 & 59 & 3.33 GB &  23.0 & 4 & 4 \\
\hline
\texttt{t.CAL} & 29\,358 & $\skull$ & [379.0, 1025.5] &  119\,055 & 22.00 GB & 506.5 & 50 & 1.5 \\
\texttt{t.FLA} & $\skull$ & 143.3 GB & [81.0, 818.0] & 1\,199\,907 & 18.04 GB &  229.5 & 25 & 1.5 \\
\texttt{roadNet-PA} & $\skull$ & 148.2 GB & [109.0, 370.5] & 1\,357\,512 & 23.32 GB & 170.5 & 20 & 1.5 \\
\hline
\texttt{buddha} & $\skull$ & 88.35 GB & [93.0, 211.5] & 134\,421 & 52.84 GB & 112.0 & 8 & 1.5 \\
\texttt{froz} & $\skull$ & 106.4 GB & [387.5, 599.0] & 16\,011 & 11.74 GB & 401.5 & 27 & 1.5 \\
\hline
\texttt{grid300-10} & 10 & 1.08 GB & 280.0 & 23 & 5.28 GB & 280.0 & 10 & 1.5 \\
\texttt{grid500-10} & 95 & 2.99 GB & 463.0 & 98 & 6.14 GB & 463.0 & 10 & 2 \\
\texttt{z-alue7065} & 33 & 431.18 MB & 138.0 & 1\,927 & 3.48 GB & 138.0 & 2 & 2 \\
\hline
\end{tabular}
}
\end{table*}

To compare our new approach with~\cite{coudert:hal-03201405}, we let both algorithms run on the same graphs. The parameters of our approach are manually chosen and we give more details about this choice in \S~\ref{sec:exp_paramchoice}. We show the results of our experiments in Table~\ref{tab:experiments}. The most notable points of these results are:
\begin{itemize}
\item For the first three graphs in the experiments (\texttt{NotreDame}, \texttt{web-Stanford}, \texttt{web-BerkStan}), we reduce the memory consumption compared to~\cite{coudert:hal-03201405} by factors of 16.2, 7.2, and 23.1. The running time is reduced by factors of 18.1, 549.9, and 1104, i.e., up to 3 orders of magnitude.
\item Computing the hyperbolicity of the second and third set of graphs (\texttt{t.CAL}, \texttt{t.FLA-w}, \mbox{\texttt{roadNet-PA}}, \texttt{buddha}, \texttt{froz}) was not feasible using previous algorithms but it can be computed using our approach. Especially, we are able to compute the hyperbolicity of a real-world graph with more than a million nodes for the first time. We want to highlight that the memory consumption of our algorithm is below 25 GB for all graphs except \mbox{\texttt{buddha}}. Even in the cases where the algorithm of~\cite{coudert:hal-03201405} hits the time limit of 60 hours and not the memory limit, increasing the time limit most probably does not make these graphs attainable, as the lower and upper bounds are still very far from matching.
\item The class of graphs where our algorithm fails to outperform previous work is grid-like graphs (\texttt{grid300-10}, \texttt{grid500-10}, \texttt{z-alue7065}). This is probably due to the notion of far-apart pairs introduced in~\cite{BorassiCCM15} and also used in~\cite{coudert:hal-03201405}, which we do not use in our approach. Computing the hyperbolicity iterating over far-apart pairs is very fast on grid-like graphs as they contain only very few far-apart pairs --- a perfect grid actually just contains two far-apart pairs, which also form the quadruple that implies its hyperbolicity.
\end{itemize}

\begin{figure}[htbp]
	\centering
	\includegraphics[width=.5\linewidth]{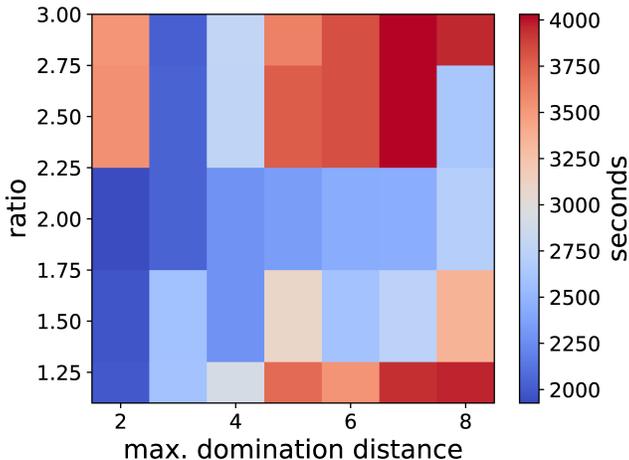}
	\caption{The running times for the \texttt{z-alue7065} graph for different parameters.}
	\label{fig:zalue}
\end{figure}

\subsection{Parameter Choice.} \label{sec:exp_paramchoice}

The main difficulty for applying our algorithm to new graphs is to determine the parameters that lead to the best performance. Note that these parameters only influence the running time and memory consumption of our algorithm but not the output. As described in \S~\ref{sec:exp_parameters}, we use the two parameters \emph{maximal domination distance} and \emph{ratio}, which in turn determine the dominating set distances that are used. To understand the effect on the running time, we ran our algorithm with different parameters on the graph \texttt{z-alue7065}. More precisely, we did an experiment for all combinations of the maximal domination distance being from the set $\{2, 3, \dots, 8\}$ and the ratio being from the set $\{1.1,\, 1.5,\, 2,\, 2.5,\, 3\}$. The running times of these experiments are shown in Figure~\ref{fig:zalue}. The results suggest that too large or too small values for the ratio are detrimental to a fast running time. Furthermore, on this graph a small maximal domination distance is better. As can be seen by the parameter choices in Table~\ref{tab:experiments}, the moderate value for the ratio is also a good choice for other graphs, while the maximal domination distance can vary greatly depending on the graph. We leave it to future research to investigate how specific graph features influence the best parameter choice.

\begin{figure}[htbp]
\centering
	\begin{subfigure}{.48\linewidth}
\centering
	\includegraphics[width=\linewidth]{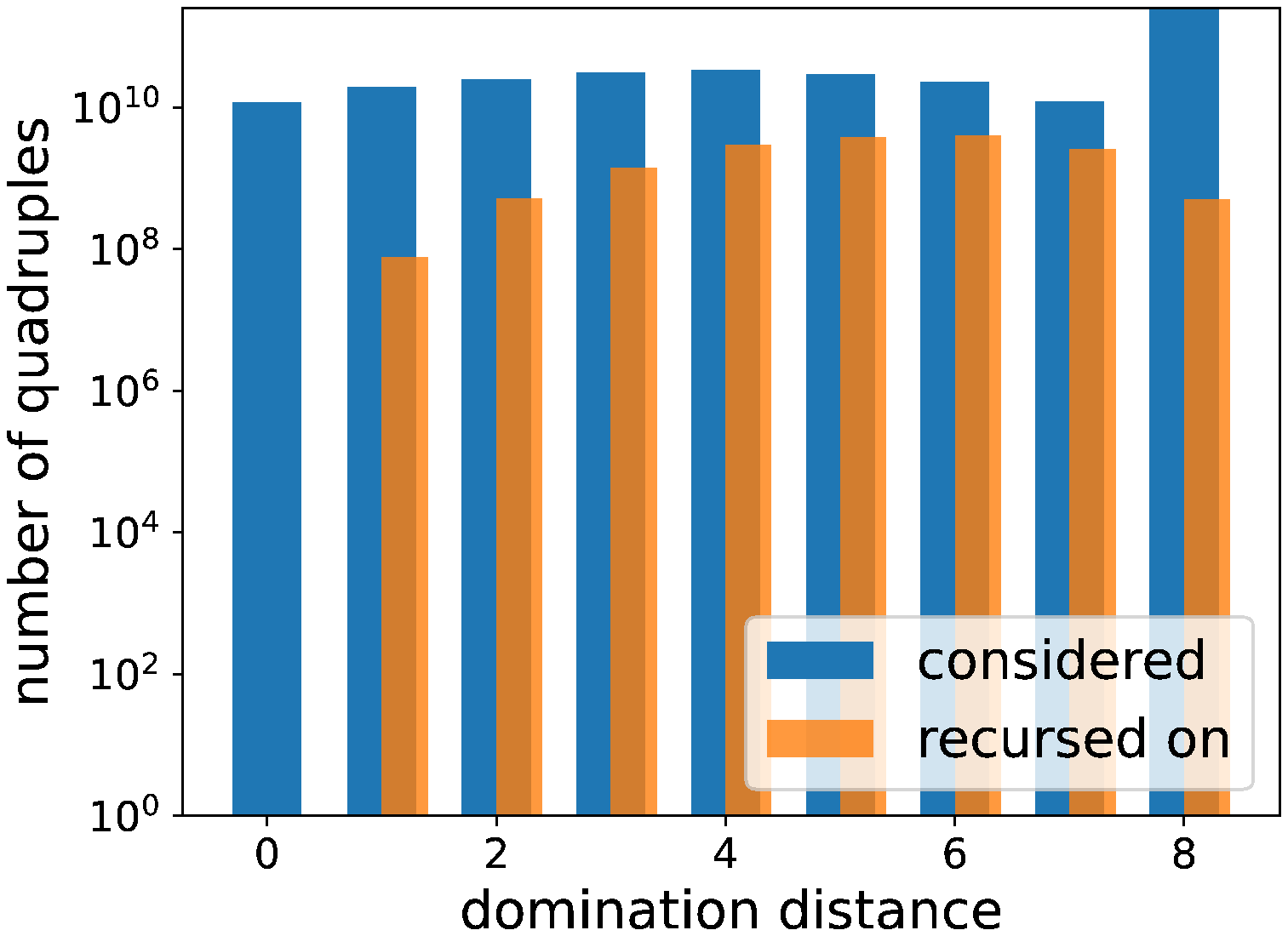}
	\vspace{-3mm}
	\caption{\texttt{buddha:} maximal domination distance $8$ and ratio $1.01$.}
	\label{fig:buddha101}
	\end{subfigure} 
	\hfill
	\begin{subfigure}{.48\linewidth}
\centering
	\includegraphics[width=\linewidth]{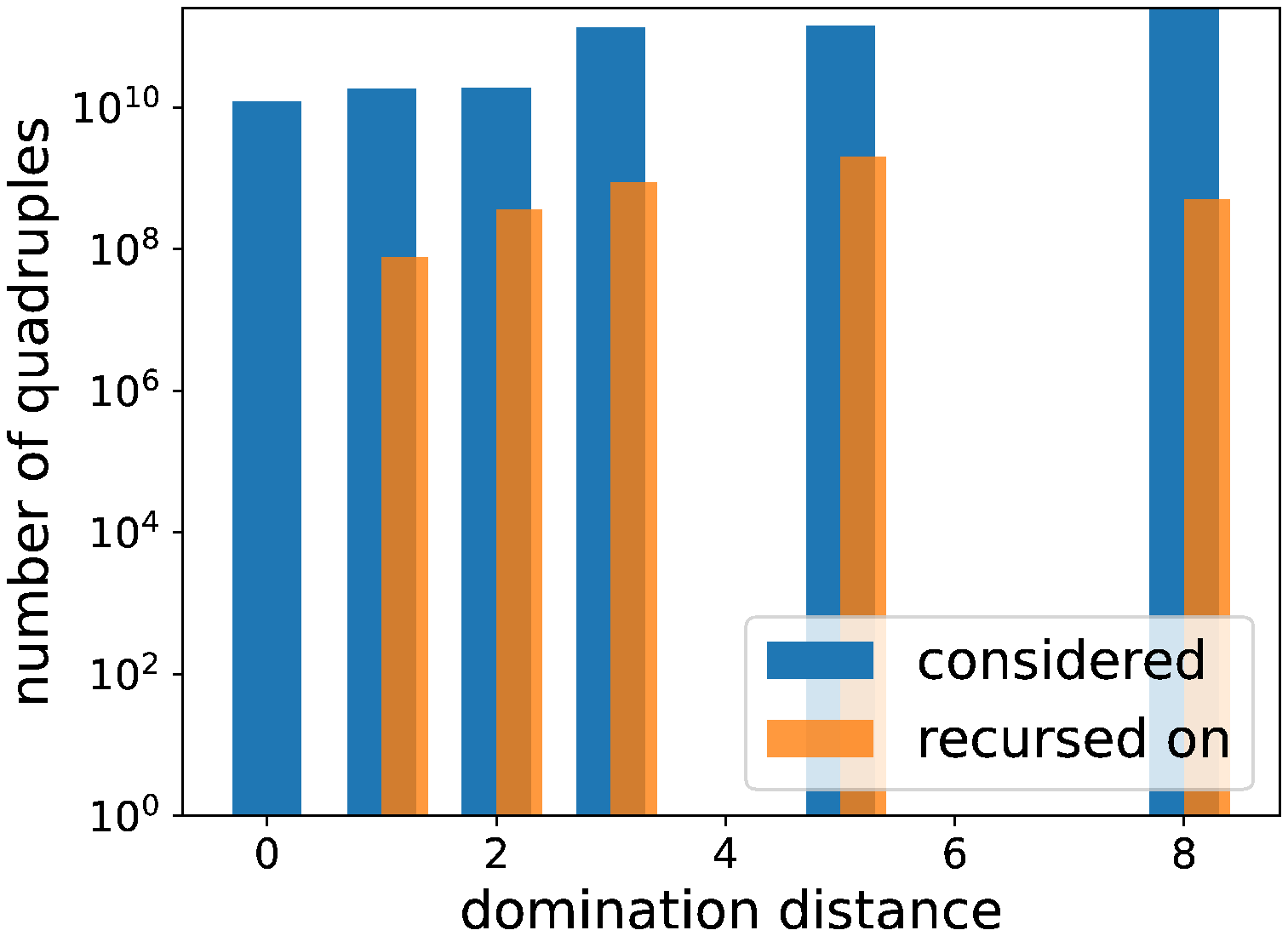}
	\vspace{-3mm}
	\caption{\texttt{buddha:} maximal domination distance $8$ and ratio $1.5$.}
	\label{fig:buddha15}
	\end{subfigure} \\
	\begin{subfigure}{.48\linewidth}
\centering
	\includegraphics[width=\linewidth]{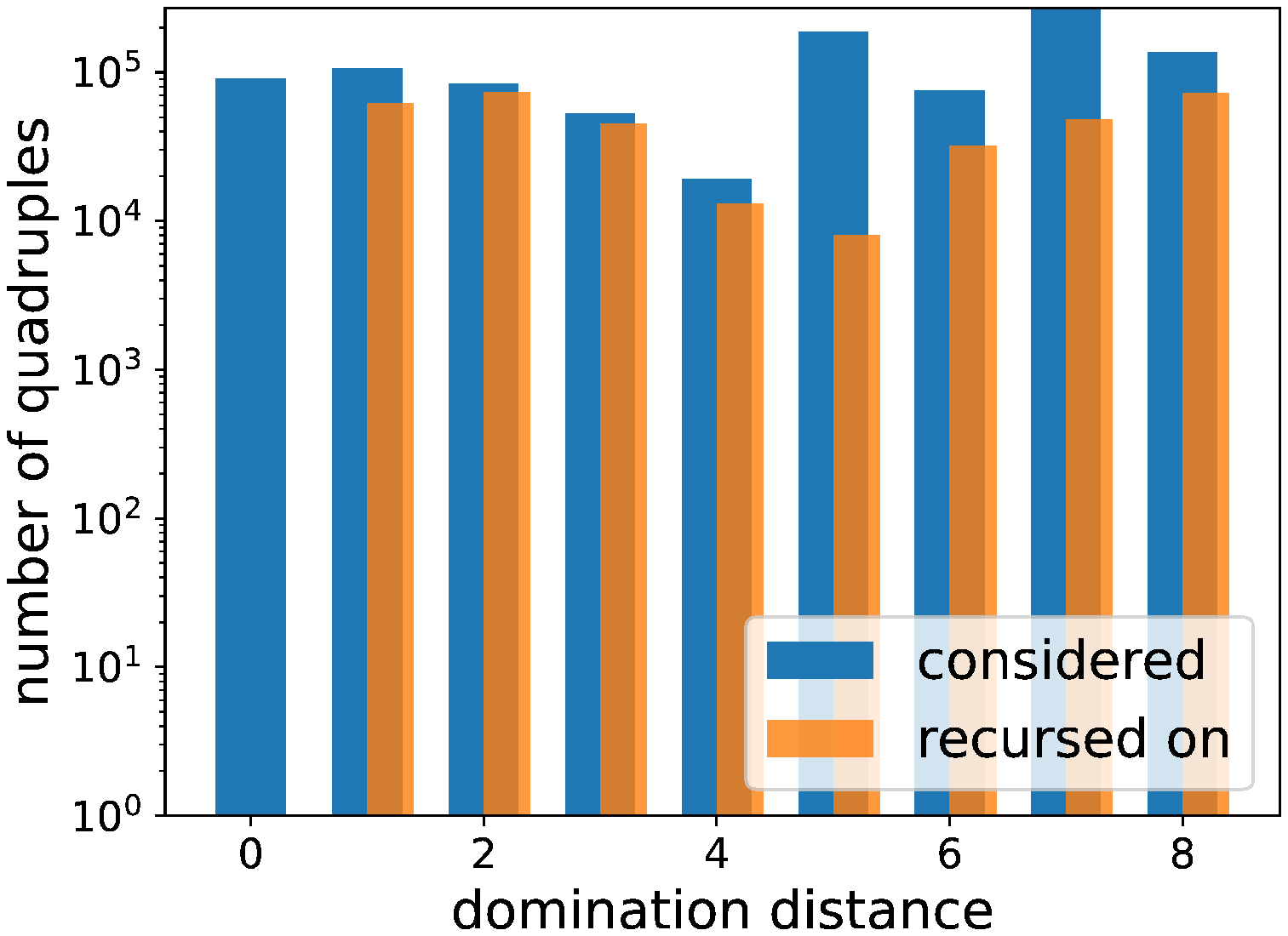}
	\vspace{-3mm}
	\caption{\texttt{web-Stanford:} maximal domination distance $8$ and ratio $1.1$. Note that for this graph the number of considered quadruples and recursive calls are not unimodal.}
	\label{fig:stanford}
	\end{subfigure} 
	\hfill
	\begin{subfigure}{.48\linewidth}
\centering
	\includegraphics[width=\linewidth]{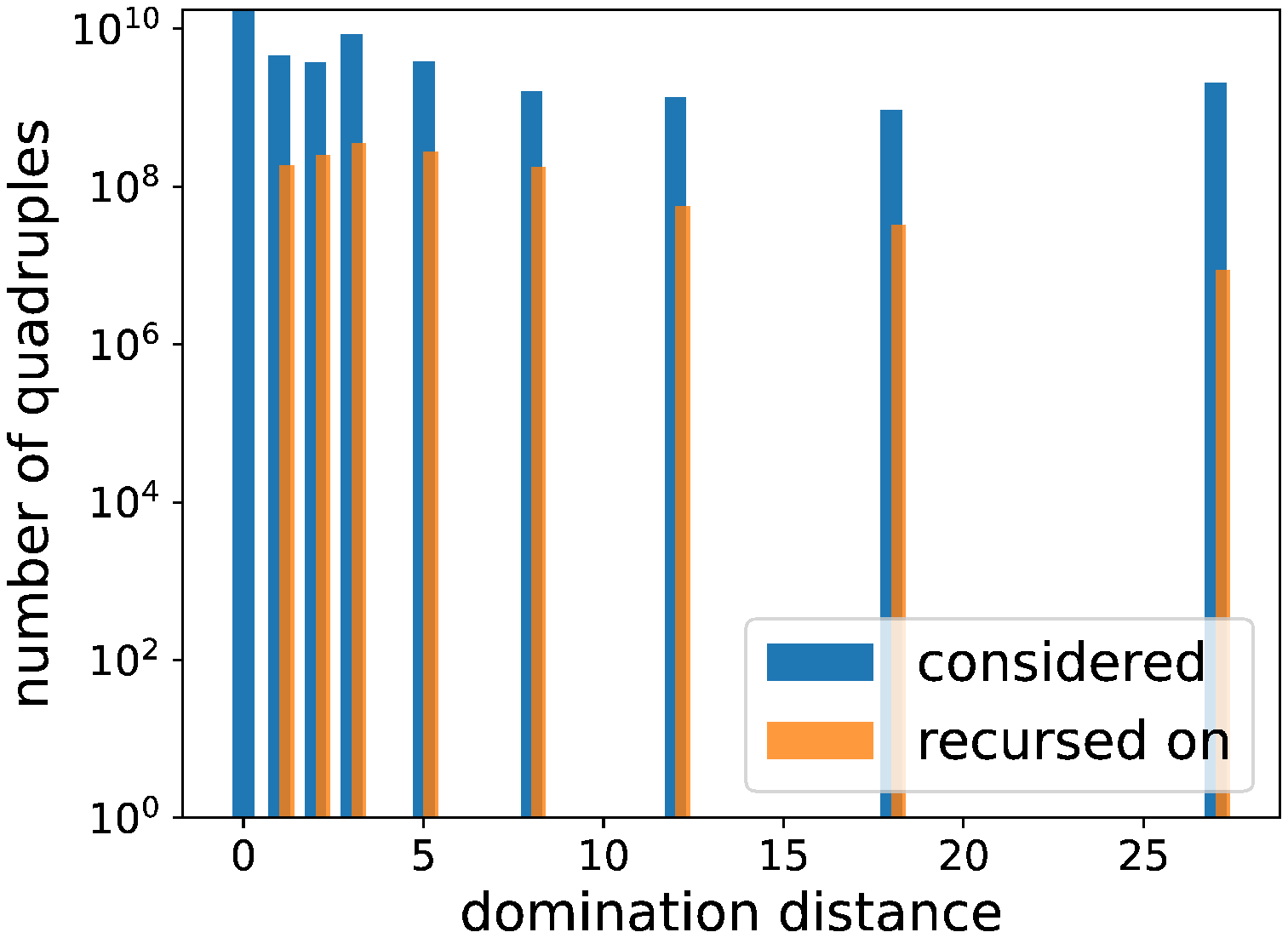}
	\vspace{-3mm}
	\caption{\texttt{froz:} maximal domination distance $27$ and ratio $1.5$. Note that the maximal number of recursive calls is occurring at a quite small domination distance.}
	\label{fig:froz}
	\end{subfigure}
	\caption{The number of considered quadruples and the number of quadruples that we call \texttt{explore} on, for each level of the hierarchy of dominating sets.}
	\label{fig:quad}
\end{figure}

To understand how different sequences of domination distances help to reduce the running time, we conduct an experiment that examines the number of recursive calls on different levels of the hierarchy of dominating sets. To this end, we plot the number of considered quadruples and the number of quadruples that we call \texttt{explore} on (i.e., that we recurse on) for each level of the hierarchy of dominating sets, see Figure~\ref{fig:quad}. Let us first consider the figures~\ref{fig:buddha101} and~\ref{fig:buddha15}, which are both runs on the \texttt{buddha} graph: one with ratio 1.01 and the other with ratio 1.5. Note that the run with the larger ratio skips some domination distances, however, the shape and magnitude of the number of recursive calls does not significantly change. Thus, the pruning of recursive calls works equally well, even when omitting domination distances. This explains how our technique can significantly reduce the computation time. When considering Figure~\ref{fig:stanford}, we can see that the number of recursive calls does not have to be unimodal, but can also exhibit different distributions. Furthermore, as can be seen in Figure~\ref{fig:froz}, the maximal number of recursive calls can also occur at quite small domination distances, even though we need to start with a large one to have a fast overall running time.



\section{Conclusion}\label{sec:conclusion}

In this work, we presented a new practical algorithm to compute the graph hyperbolicity of very large graphs (compared to the graph sizes that were feasible for previous algorithms). The main idea of our algorithm is to construct a hierarchy of dominating sets and use these to prune a large number of quadruples that need to be considered as candidates for the hyperbolicity of the graph. Our approach is especially suited for graphs with small dominating sets and a hyperbolicity that is not very small. For the first time we can compute the hyperbolicity of non-trivial graphs with more than a million nodes and we obtain speed-ups of three orders of magnitude while reducing the memory consumption by more than an order of magnitude.

A problem that we leave open is how to derive the parameters that lead to the lowest running time and memory consumption for our algorithm in an automated fashion. 
Another open question is how to use the notion of far-apart pairs to further reduce the number quadruples to consider.



\bibliographystyle{plainurl}
\bibliography{biblio}

\end{document}